\pgfplotsset{compat=1.14}
\newtheorem{theorem}{Theorem}
\newcommand{\qedsymbol}{\hspace{\fill}\rule{1.5ex}{1.5ex}}
\newtheorem{proposition}[theorem]{Proposition}
\newenvironment{proof}[1][Proof]{\noindent \textbf{#1.} }{\qedsymbol}
\def\Trials{[1\ldots n_k]}
\def\Iterations{[1\ldots n_r]}
\def\Steps{T}
\def\sumk{\sum_{k=1}^{n_k}}
\def\sumr{\sum_{r=1}^{n_r}}
\def\sumt{\sum_{t\in\Steps}}
\def\lagrangian{L(w,b,\xi,\eta,\lambda,\rho,\mu,\theta)}
\newcommand{\R}{\mathbb{R}}
\def\foralli{\forall i\in TS}
\def\forallinotrg{\forall i\in TS: y_i\neq 1}
\def\sumi{\sum_{i\in TS}}
\def\suminotrg{\sum_{i\in TS}^{y_i\neq 1}}
\def\sumip{\sum_{i^{\prime}\in TS}}
\def\sumipnotrg{\sum_{i^{\prime}\in TS}^{y_{i^{\prime}} \neq 1}}
\begin{document}

\title{Improving P300 Speller performance by means of optimization and machine learning}
\date{}
\author[1]{Luigi Bianchi\thanks{luigi.bianchi@uniroma2.it}}
\author[1]{Chiara Liti\thanks{chiara.liti@uniroma2.it}}
\author[2]{Giampaolo Liuzzi\thanks{giampaolo.liuzzi@diag.uniroma1.it}}
\author[1]{Veronica Piccialli\thanks{veronica.piccialli@uniroma2.it}}
\author[1]{Cecilia Salvatore\thanks{cecilia.salvatore@alumni.uniroma2.eu}}
\affil[1]{dept. of Civil Engineering and Computer Science, \textit{University of Rome Tor Vergata}}
\affil[2]{Department of Computer, Control and Management Engineering, \textit{Sapienza University of Rome}}
\maketitle

\section*{Abstract}
Brain-Computer Interfaces (BCIs) are systems allowing people to interact with the environment bypassing the natural neuromuscular and hormonal outputs of the peripheral nervous system (PNS). These interfaces record a user’s brain activity and translate it into control commands for external devices, thus providing the PNS with additional artificial outputs. 
In this framework,  the BCIs based on the P300 Event-Related Potentials (ERP), which represent the electrical responses recorded from the brain after specific events or stimuli, have proven to be particularly successful and robust. The presence or the absence of a  P300  evoked potential within the EEG features is determined through a classification algorithm. Linear classifiers such as stepwise linear discriminant analysis (SWLDA) and support vector machine (SVM) are the most used discriminant algorithms for ERPs’ classification. Due to the low signal-to-noise ratio of the EEG signals, multiple stimulation sequences (a.k.a. iterations) are carried out and then averaged before the signals being classified.
However, while  augmenting the number of iterations improves the Signal-to-Noise Ratio (SNR), it also slows down the process. In the early studies, the number of iterations was fixed (no stopping), but recently, several early stopping strategies have been proposed in the literature to dynamically interrupt the stimulation sequence when a certain criterion is met to enhance the communication rate. In this work, we explore how to improve the classification performances in P300 based BCIs by combining optimization and machine learning. First, we propose a new decision function that aims at improving classification performances in terms of accuracy and Information Transfer Rate both in a no stopping and early stopping environment. Then, we propose a new SVM training problem that aims to facilitate the target-detection process. Our approach proves to be effective on several publicly available datasets.

\section{Introduction}\label{sec:Intro}

A Brain-Computer Interface (BCI) is a system that records a user's brain activity and allows him to interact with the environment by exploiting both signal processing and machine learning algorithms. In most cases, the recorded signals are noisy, so that filtering or averaging techniques are used to improve the signal-to-noise ratio (SNR). The information embedded in signals that are relevant to characterize the user's mental states are then selected during a feature extraction procedure before being classified and translated into artificial outputs \textemdash i.e. into control commands for an output device such as a pointer, a keyboard or a robotic arm \cite{lotte2014tutorial,lotte2018review,QuitadamoUML, wolpaw2012brain, mccane2015p300}. BCIs use either electrical, magnetic and metabolic signals \cite{wolpaw2012brain} recorded with methods such as electroencephalography (EEG), electrocorticography (ECoG), magnetoencephalography (MEG), functional Near Infra-Red Spectroscopy (fNIRS) and functional Magnetic Resonance Imaging (fMRI).\\
In this framework, BCIs based on event-related potentials (ERPs) have proven to be particularly successful and robust \cite{schreuder2013optimizing}. ERPs represent the electrical responses recorded from the brain through EEG techniques after specific events or stimuli. The ERPs are embedded within the general EEG activity \cite{sur2009event}, and are time-locked to the processing of a specific stimulus. As their amplitude is lower that the one of the ongoing EEG activity, averaging techniques are employed to increase the SNR: in principle, averaging background noise which is not correlated to an event, such as the ongoing EEG activity, tends to reduce its contribution to a small offset, which can be easily filtered out, while the evoked responses, supposed to be the same after each stimulus, are left unmodified.  
An ERP-based BCI attempts to detect ERP components to infer the stimulus that the user intended to choose \textemdash i.e. the stimulus eliciting the ERP components \cite{treder2010c}.\\
In 1988, the P300 ERP was first used by Farwell and Donchin within a BCI system \cite{farwell1988talking}. Their P300 Speller consists of 36 alpha-numeric characters arranged within the rows and columns of a $6 \times6$ matrix. The user's task is to focus the attention on a specific character \textemdash i.e. on one of the cells of the matrix. Each of the 6 rows and 6 columns then flashes for few tenths of milliseconds in a random sequence. A sequence of 12 different flashes \textemdash the 6 rows and 6 columns \textemdash is called an iteration. It constitutes the basis of an \textit{oddball paradigm} in which two classes of stimuli, namely the target (or rare) and the non target (or frequent), which occur with different probabilities (0.166 and 0.833 in this case), and that elicit two different brain responses. In particular, the target (rare) stimuli should elicit the P300 response which is not evoked after non target (frequent) stimuli. In our case the row and the column containing the attended character represent the target stimuli, while the other ten are the non-target ones. Brain responses to the target and non-target stimuli are distinguished using a classification algorithm. The correct identification of the target row and column allows the desired character's selection, which is located at their intersection \cite{krusienski2006comparison,krusienski2008toward, sellers2006p300}.\par
Later on in the literature, different variations of the original P300 paradigm have been developed in order to improve the speller framework. For instance, in \cite{schaeff2012exploring, schreuder2011listen, treder2011gaze} the authors proposed gaze-independent spellers, i.e. communication systems that can be used by subjects who have impairment at moving their eyes. In all speller paradigms, given a sentence/run to copy-spell, the EEG data are organized in terms of trials, iterations, and sub-trials. A single character selection step is here referred to as a trial. Each trial consists of several iterations/stimulation
sequences, during which all the stimuli are intensified once in a pseudo-random order. 
A single stimulus intensification is here referred to as a sub-trial. The trials' selection process can involve one ore two levels. In the former case, symbols are typically presented successively thus involving a single selection step. In the latter, the user has to select a group of symbols first and then the target symbol.\par
To use a BCI, two phases namely training/calibration and test/online are typically required. During the calibration phase, the user focuses his/her attention on a specific character. The acquired EEG signals are then preprocessed by filtering. A subset of EEG features is extracted to represent the signal in a compact form. The obtained EEG patterns are recognized using a classification algorithm, which is trained on the subset of identified features to determine the presence or the absence of a P300 evoked potential. In the online phase, new EEG patterns are classified using the trained model before being translated into a command for an application.
As described above, in ERP-based BCIs, to perform a single selection step, multiple iterations are carried out to improve the SNR. However, repeated stimulations increase the time necessary
to detect the brain signals reducing the communication rate. In this work, we explore how to improve the classification performance by combining optimization and machine learning.  

\subsection{Literature Review}\label{subsec:literature}
As mentioned above, the presence or the absence of a P300 evoked potential within the EEG features is determined using a classification algorithm \cite{krusienski2006comparison}.\\
Formally, the detection of brain responses to the target and non-target stimuli can be translated into a binary classification problem.
Let $TS$ be the training set defined as:
\begin{multline}\label{training_set_compact}
    TS = \{(x_i, y_i): x_i\in \R^n, y_i\in S\subset\{-1,+1\},\\ i\coloneqq(k,r,t,f)\quad \forall k=1\dots n_k,\  r=1\dots n_r,\ t\in\Steps,\ f =1\dots n_f \}
\end{multline}
where $n_k$ denotes the total number of trials in the training phase and $n_r$ denotes the number of iterations for each trial; the number of flashes $n_f$ and the set of levels $\Steps$ together denote the set of possible stimuli that compose the stimulation sequence (i.e. $n_f = 6$ and $\Steps=\{row,\ column\}$ for P300 Speller's paradigm or $\Steps=\{outer,\ inner\}$ for two-levels paradigms).

During the calibration phase, a classification algorithm is trained over $TS$ to learn the discriminant function $f$ such that
\begin{equation}\label{eq:mapping_function}
f(x)=y,
\end{equation}
and this function is used in the online phase to spell words or sentences.
In the BCI literature, several algorithms have been proposed for addressing this classification problem \cite{lotte2018review}. In particular, linear classifiers such as stepwise linear discriminant analysis (SWLDA) \cite{draper1998applied}, and support vector machine (SVM) \cite{friedman2001elements} are still the most used discriminant algorithms for ERPs' classification \cite{lotte2018review}.
These methods classify the brain responses by means of a separating hyperplane \cite{krusienski2006comparison}. This discriminant function is built on the basis of the training data, and it is defined as:
\begin{equation}\label{hyperplane}
  f(x)= w^Tx + b,
\end{equation}
where $w$ is the vector containing the classification weights and $b$ is the bias term. Linear classifiers differ in the way they learn $w$ and $b$ \cite{krusienski2006comparison}.  In \eqref{hyperplane}, the right-hand side is called decision value. Its absolute value is proportional to the distance of the sample points $x$ from the separating hyperplane.\par
In a standard binary classification problem, for each instance the class label is assigned based on the sign of the relative decision value. However, in a classical P300 Speller \cite{farwell1988talking}, based on the assumption that a P300 is elicited for one of the six row/column stimuli and finding that the P300 response is invariant to row/column stimulation, the target class is assigned to the stimuli matching the maximum decision values for both the rows and the columns  \cite{krusienski2006comparison}. In general, remembering the definition of $\Steps$ and $n_f$ given in \eqref{training_set_compact}, we can identify the target stimulus for trial $k\in\Trials$ and iteration $r\in\Iterations$ as: 
\begin{equation}
	\begin{split}
		\mbox{predicted stimulus}_{(k,r,t)} & =\underset{f=1\dots n_f}{\mathrm{argmax}}\left[w^{T}x_{(k,r,t,f)} +b\right] \quad \forall t\in\Steps \\
	\end{split}
\end{equation}
The predicted character for trial $k\in\Trials$ and iteration $r\in\Iterations$ is then identified by combining the predicted target stimuli found $\forall t\in\Steps$ (i.e a row target and a column target for the standard P300 paradigm).\par

As mentioned in Section \ref{sec:Intro}, for each character, data recorded from multiple iterations have to be integrated to improve the SNR. To the best of our knowledge there exist two main different iteration-averaging strategies in the literature: (i) \textbf{ERP avg}: for each character brains responses to target and non target stimuli are averaged across the iterations before being classified, and (ii) \textbf{DV med}: for each character the decision values of each target and non-target stimulus are averaged across the iterations before assigning the target class. Recently in \cite{noiES:2019}, a new classification function namely score-based function (SBF) has been introduced for integrating brain responses recorded from multiple iterations. For each character, the SBF exploits a set of heuristically-determined scores to weight each stimulus according to its decision value. For each stimulus, the assigned scores are summed up iteration by iteration. The target class (one for the row and one for the column) is assigned to the stimulus having the highest total score at the last available iteration. The SBF has been introduced for developing an early stopping method (ESM) \textemdash i.e. an automatic method that interrupts the stimulation at any point in a trial when a certain criterion, based on the ongoing classification results, is satisfied (see for instance \cite{Lenhardt,Zhang,Liu,Hone,schreuder2011listen,Jin,Throckmorton2013,Mainsah2014healthy,Jiang2018,Vo2017,Vo:2018paper, schreuder2013optimizing}). The proposed ESM outperformed the current state-of-the-art early stopping methods. 

In this paper, we follow the same line of research of \cite{noiES:2019}, by making some further steps to include the information of the protocol into the classification phase. Indeed, the novelty of our approach consists of three points:
\begin{enumerate}
     \item determine the optimal scores for each participant by solving an optimization problem on her/his training data; 
    \item solve a modified version of the optimization problem in order to implement an efficient early stopping method;    \item include the information on the decision function (the target is the stimulus having maximum decision value) into the training problem

\end{enumerate}
The great advantage of our method, is that the calibration phase (different for each participant) becomes completely automatic and does not need any cross validation phase or manual parameters tuning.

The paper is structured as follows: in Section \ref{sec:opt}, we introduce our new decision function, defining the optimization problems to be solved both in the no stopping and early stopping scenario. In Section \ref{sec:newtr}, we introduce a new training problem that keeps into account explicitly the target assignment in BCI, and in Section \ref{sec:wolfe} we derive its Wolfe dual. In Section \ref{numres} we report the behavior of our new approaches on several datasets and finally we draw some conclusions in Section \ref{sec:conc}.

\section{An optimized score based decision function}\label{sec:opt}
In \cite{noiES:2019}, a set of heuristically-determined scores has been used to weight and combine the decision values of multiple iterations within an early stopping setting. In this work, we decided to modify the approach by using a set of scores automatically determined by solving a mixed integer linear programming (MILP) problem for each participant. 
Each stimulus receives a weight according to its decision value: five zones are defined, and each zone gets a different score \emph{a,b,c,d,e}. In particular, the scores are related to the confidence in the classification of the given stimulus as target: the score $a$ is assigned to the stimulus that is most likely to be the target, whereas the stimuli that are highly unlikely to be the target get score $e$. All the stimuli in the middle get decreasing scores according to the distribution of the decision values.

The zones are identified by considering the decision values of all iterations for all stimuli in the training set and computing the corresponding quartiles Q1, Q2 and Q3. The idea is to produce scores that reflect the distribution of the data. 
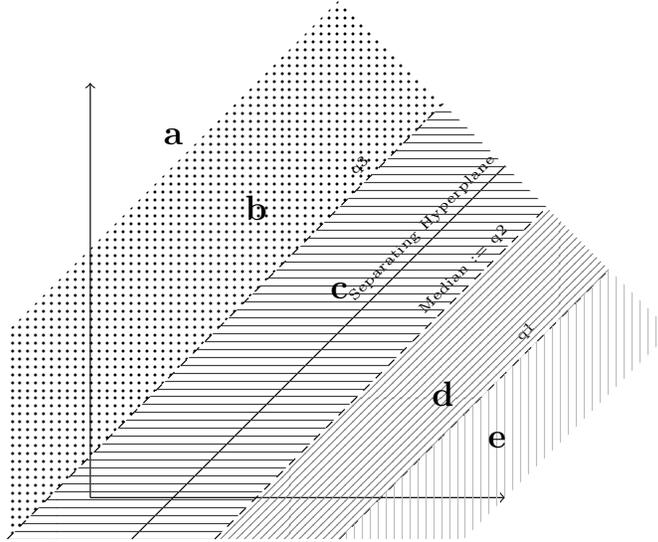
\begin{figure}
\centering
\resizebox{0.75\linewidth}{!}{
\begin{tikzpicture}
           \draw[->] (0,0)--(5,0) node[right]{}; 
           \draw[->] (0,0)--(0,5) node[above]{}; 

          \draw (0.5,-0.5) -- (5,4); 
          \draw[dashed] (-1,-0.5) -- (4.25,4.75); 
          \draw[dashed] (1.5,-0.5) -- (5.5,3.5) ; 
          \draw[dashed] (3,-0.5) -- (6.25,2.75); 

          \draw (4,3.25) node[rotate=45,font=\tiny]
              {Separating Hyperplane};

          \draw (4.5,2.75) node[rotate=45,font=\tiny]
              {Median := q2};

          \draw (3.25,4) node[rotate=45,font=\tiny]
              {q3};

          \draw (5.25,2) node[rotate=45,font=\tiny]
              {q1};

          \draw[black,pattern=dots, pattern color=black, draw=none] (-1,-0.5) -- (-1,2) -- (3,6) -- (4.25,4.75)-- cycle;

          \draw[darkgray,pattern=horizontal lines,pattern color=darkgray, draw=none] (1.5,-0.5) -- (-1,-0.5) -- (4.25,4.75) -- (5.5,3.5) -- cycle;

          \draw[gray,pattern=north east lines,pattern color=gray, draw=none] (1.5,-0.5) -- (5.5,3.5) -- (6.25,2.75)-- (3,-0.5) -- cycle;

          \draw[lightgray,pattern=vertical lines,pattern color=lightgray, draw=none] (3,-0.5) -- (6.25,2.75) -- (7,2) -- (4.5,-0.5) -- cycle;
         \draw (1,4.35) node[font=\large] {\textbf{a}};
           \draw (2,3.5) node[font=\large] {\textbf{b}};
           \draw (3,2.5) node[font=\large] {\textbf{c}};
           \draw (4.25,1.25) node[font=\large] {\textbf{d}};
           \draw (4.9,0.7) node[font=\large] {\textbf{e}};           

  \end{tikzpicture}
  }
\caption{Graphical representation of the score distribution, reflecting the displacement of the points
w.r.t the distribution of the decision values. The different areas represent the
confidence of the classification w.r.t the target class.}
\label{fig:score_rule}
\end{figure}

  







Figure \ref{fig:score_rule} shows how the scores are assigned depending on the distribution of the quartiles of the decision values.
The maximum score $a$ is assigned only if the confidence in the current classification is extremely high: i.e. if the decision value is positive and higher than all the other decision values of the current iteration.


Note that, given the separating hyperplane, the score assignment for each stimulus of each character is known: so, it is possible to build the following binary vectors that represent in a compact form the score vector assignment $z$ for each stimulus of each character:
\[z_s^{k,r,t,f} =\left\{\begin{array}{ll}
    1 & \mbox{if stimulus $f$ of level $t$ gets score $s$ at iteration $r$ for char $k$} \\
    0 & \mbox{otherwise}
\end{array}\right.
\]
where  $f=1\dots n_f$ and $t\in\Steps$ identify the stimulus, $k=1\dots n_k$ identifies the character, $r=1\dots n_r$ identifies the iteration and, finally, $s=a,\ldots, e$ identifies the score. The score assignments depends on the kind of the primary aim of the BCI: 
\begin{description}
\item[(i)]if the main focus is the accuracy, the idea is to use all the available iterations for spelling a character (no stopping protocol), also in the online phase.
\item[(ii)] if the idea is to try and speed up the communication, then the performance to be maximized is the transmission rate, trying to reduce the number of iterations needed to spell a character in the online phase (early stopping). 
\end{description}
In the next two subsections, we describe the  Mixed Integer Linear Programming (MILP) Problems we define in order to find the scores in the two different settings.
\subsection{No stopping OSBF}
First, we propose a strategy to choose the scores when all the iterations are exploited and the primary focus is to increase the classification accuracy. In this setting, we aim to reliability of the classification and we do so by imposing the following constraints:

\begin{enumerate}
    \item at the last iteration, we require, if possible, that the score obtained by the target stimulus is larger (with some margin if possible, that implies robustness of the classification) than the score of any non target stimulus. This means that we ask not to fail in the classification after the last available iteration; if this is not possible, a suitable binary variable representing the failure on that stimulus is set to one;
    \item to make the classification more robust on the test set, we require that in as many iterations as possible, the score of the target is larger than the one of the non target stimuli;
    \item as an objective, we try and maximize the accuracy on the training set, and the number of iteration where the classification is robust.
\end{enumerate}
Our main variable in the optimization problem is the vector of scores $s=\left(a~b~c~d~e\right)^T$.

We add an auxiliary variable to try and impose some distance between the score of the target stimulus and the scores of the non target stimuli that we call $\Delta$, and that represents a measure of reliability of the classification. Further, we add some binary variables:
\begin{itemize}
\item $x_t^{k,r}$: binary variable that is equal to $1$ if the target of character $k$ for level $t$ has a score at iteration $r$ that is larger than the score of any non target stimulus plus $\Delta$ 
\item $err_t^{k}$: binary variable that is equal to $1$ if the target is not correctly classified  for character $k$ at level $t$, i.e. if at the last iteration the target score is lower or equal to the score of some non target stimulus
\end{itemize}
The MILP problem to be solved is then the following:
\begin{align}\label{eq:wobj}
\max &\displaystyle \sumt\Bigg(\Big(1-\frac{\sumk err_t^k}{n_k}\Big) + \frac{1}{n_kn_r}\sumk\sumr x_t^{k,r}\Bigg)
 \\\label{eq:wub}
& s_1\le u\\\label{eq:word}
& s_{j+1}\le s_j-1\,\forall j=1,\ldots,4\\\label{eq:w3nneg}
&s_3\ge 0\\\label{eq:wlb}
&s_5\ge l\\\label{eq:nostop2it}
&\displaystyle \Delta\ge s_1-s_5+1\\\label{eq:fail}
&\displaystyle 1-err_t^k\le x_t^{k,nr}\,\forall k=1\dots n_k,\ \forall t\in\Steps\\\label{eq:ES}
& \displaystyle \sum_{r=1}^{\bar r}s^Tz^{k,r,t,f}+ \Delta \le \sum_{r=1}^{\bar r}s^Tz^{k,r,t,trg(k,t)}\\\nonumber&\displaystyle\quad+M(1-x_t^{k,\bar r})\quad\forall k,\bar r,t, f\colon f\not=trg(k,t)\\
& err_t^{k},x_{t}^{k,r} \mbox{ binary }\forall k~\forall t~\forall r
\end{align}

where $l$ and $u$ are chosen bounds on the possible values of the scores, and $M$ is large enough to make the constraints trivially satisfied when the corresponding binary variable $x_t^{k,r}$ is zero. 
The objective function, that has to be maximized, is composed by two terms: the percentage of success on the training set, and the average number of iterations where the classification is robust and reliable.
We then have the following constraints:
\begin{description}
\item[(i)] Constraints (\ref{eq:wub}), (\ref{eq:word}) and (\ref{eq:wlb}) impose that the scores are bounded and that are ordered in decreasing order and differ of at least one; whereas constraint (\ref{eq:w3nneg}) imposes that the first three scores are nonnegative
\item[(ii)] constraint (\ref{eq:nostop2it}) imposes a lower bound on the threshold to ensure reliability of the classification. Indeed this lower bound ensures that the threshold has a minimum value depending on the scores: in particular $s_1-s_5+1$ represents the maximum difference in score that can be assigned to different flashes in a single iteration. Therefore, even in the worst possible scenario, where two flashes get the same score, there must be at least one iteration where one gets the maximum score and the other the minimum score to break the parity.
\item[(iii)] constraints (\ref{eq:fail}) impose that  variable $err_t^k$ is 1 if and only if $x_t^{k,n_r}=0$, that is it represents an unreliable classification at the last iteration.
\item[(iv)] constraints (\ref{eq:ES}) impose that if at iteration $\bar r$ the classification is reliable for the target $trg(k,t)$ of character $k$ at level $t$, then the corresponding binary variable $x_t^{k,r}$ is set to 1.
\end{description}

\subsection{Early Stopping OSBF}

Problem (\ref{eq:wobj}) can be modified in order to  improve the system performance in terms of speed, implementing an automatic Early Stopping Method, similarly to \cite{noiES:2019}.

The idea is again to use the scores $s$ and the threshold $\Delta$ at each iteration of the test phase to verify an early stopping condition: during the test phase, the stimuli are ordered according to the sum of their scores and, if the difference in score between the first and second stimulus is greater than the threshold $\Delta$, the method classifies the target character and the remaining iterations are not performed. 

In order to adapt problem (\ref{eq:wobj}) to the early stopping setting, we introduce some further constraints, and modify the meaning of some binary variables:
\begin{align}\label{eq:wobjES} \displaystyle
\max &~ \sumt\Bigg( 1-\frac{1}{n_k}\sumk err^k_t -\frac{100\times n_{fl}}{60}\frac{SOA}{n_k}\left(\sumk\sumr r x_t^{k,r}+n_r\sumk{err_t^k}\right)\Bigg)
 \\\nonumber &(\ref{eq:wub})-(\ref{eq:nostop2it})\\
\label{eq:failES}
&\displaystyle 1-err_{t}^k\le \sumr x_{t}^{k,r}\,\forall k\in \Trials\\
\label{eq:oneitES}&\displaystyle \sumr x_{t}^{k,r}\le 1\,\forall k\in\Trials\\
\nonumber &(\ref{eq:ES})\\
\label{ES1}& \displaystyle \sum_{r=1}^{\tilde r}s^Tz^{k,r,t,f}\ge \sum_{r=1}^{\tilde r}s^Tz^{k,r,t,f}-\Delta+1\\\nonumber
&\displaystyle\quad-M(1-x_{t}^{k,\bar r})\quad\forall k, \forall t,\forall \tilde r<\bar r, \forall f\colon f\not=trg(k,t)\\
& err_{t}^{k},x_{t}^{k,r} \mbox{ binary }\forall k~\forall t~\forall r
\end{align}

In this case, the objective function keeps into account both the percentage of success (to be maximized) and the time needed for classification (to be minimized). Note that the second term (which represents the trial duration in minutes) was multiplied by a factor $100$ to make the two terms of the objective function comparable. 
We then have some further constraints, since in this case we are interested in the first iteration where the following early stopping condition is met:
\begin{equation}\label{EScond}
\displaystyle \sum_{r=1}^{\bar r}s^Tz^{k,r,t,f}+ \Delta \le \sum_{r=1}^{\bar r}s^Tz^{k,r,t,trg(k,t)}.
    \end{equation}
In this model, we set the binary variables $x_t^{k,r}$ in such a way that it is 1 if and only if the early stopping condition (\ref{EScond}) is verified for the first time on the target at iteration $r$, and it is not satisfied by any non target stimulus earlier. This is imposed by the combination of  constraints (\ref{eq:ES}),  (\ref{eq:oneitES}) and (\ref{ES1}).
\par We stress that in both the no stopping and the early stopping scenarios, the MILP problem is solved using the training set data (the same used to build the hyperplane), whereas the score efficiency is evaluated on the test set.

\section{A new training problem}\label{sec:newtr}

As already pointed out in the introduction, in order to achieve a good classification accuracy it is fundamental to exploit the information that at each iteration there is exactly one target stimulus for each level, assigning then the target class to the stimulus having the maximum decision value. Our idea is to try and add this protocol knowledge already in the training problem.

Given the definition of training set given in \ref{training_set_compact}, the standard  training problem to solve in order to find a separating hyperplane according to the SVM approach is the following \cite{surveySVM}:

\begin{align}
    \min_{w\in\R^n, b\in\R} \frac{1}{2}\|w\|^2 + C_1\sum_{i\in TS}\xi_i
\end{align}
\begin{align}
    &y_i(w^Tx_i+b)\ge 1-\xi_i\qquad &\foralli\\
    &\xi_i\ge 0 \qquad &\foralli
\end{align}

In this work, we modify the training problem including the information that the target stimuli should receive the maximum decision value among all the other flashes. Let's denote by $trg_i$ the target stimulus for the stimulation sequence where the stimulus $i$ belongs: so, in particular, if $i = (k,r,t,f)$ we will have: \[trg_i = (k,r,t,f^{\prime})\in TS \quad\&\quad y_{trg_i} = 1\]

Then, we want to impose:
\begin{equation}\label{eq:maxcext}
\begin{split}
    w^Tx_{(k,r,t,f_1)} +b\ge w^Tx_{(k,r,t,f_2)}+b \qquad  \forall& k,r,t,f_1,f_2\colon\\&y_{(k,r,t,f_1)} = 1\ \&\ y_{(k,r,t,f_2)} = -1 
\end{split}
\end{equation}
From now on, in order to simplify the notation, we will write constraints \eqref{eq:maxcext} in the following more compact form:
\begin{equation}
    w^Tx_{trg_i} +b\ge w^Tx_i+b \qquad  \forallinotrg
\end{equation}
and we add slack variables to avoid infeasibility, getting the following set of constraints:
\begin{align}\label{eq:maxconstr}
    &w^Tx_{trg_i} - w^Tx_i \ge 1 -\eta_i &\qquad  \forallinotrg \\  
    &\eta_i\ge 0 &\qquad  \forallinotrg
\end{align}

Now we simply plug these constraints into the primal problem getting the new training problem based on the maximum decision function:

\begin{align}
    &\min_{w\in\R^n, b\in\R} \frac{1}{2}\|w\|^2 + C_1\sum_{i\in TS}\xi_i +C_2\suminotrg\eta_i\label{fo:maxSVM}    
\end{align}
\begin{align}
    &y_i(w^Tx_i+b)\ge 1-\xi_i\qquad &\foralli\label{vinc:sign}\\
    &w^Tz_i \ge 1 -\eta_i &\qquad  \forallinotrg \label{vinc:max}\\ 
    &\xi_i\ge 0 \qquad &\foralli \label{vinc:nonneg_sign}\\
    &\eta_i\ge 0 &\qquad  \forallinotrg\label{vinc:nonneg_max}
\end{align}

where the vector $z$ is defined as:
\[z_i = x_{trg_i} - x_i \quad \forallinotrg\]

\section{Wolfe Dual of the new training problem}\label{sec:wolfe}
In order to build the Wolfe Dual of the quadratic optimization problem~(\ref{fo:maxSVM})-(\ref{vinc:nonneg_max}), it is necessary to introduce the dual multipliers of the constraints:
\begin{itemize}
    \item $\lambda_i\quad\foralli$: the multiplier associated to constraints~\eqref{vinc:sign}
    \item $\rho_i\quad \forallinotrg$: the multiplier associated to constraints~\eqref{vinc:max}
    \item $\mu_i\quad\foralli$: the multiplier associated to constraints~\eqref{vinc:nonneg_sign}
    \item $\theta_i\quad \forallinotrg$: the multiplier associated to constraints~\eqref{vinc:nonneg_max}
\end{itemize}

Let us define the vector $\lambda$ and $\rho$ as the vectors of size $l_1$ and $l_2$ respectively containing $\lambda_i \ (\foralli)$ and $\rho_i \ (\forallinotrg)$. Then we define the following matrix $\Sigma\in\Re^{(l1+l2)\times n}$:
\[
\Sigma = \begin{pmatrix}y^1(x^1)^T \cr\vdots \cr y^{l_1}(x^{l_1})^T\cr (z^1)^T\cr\vdots\cr (z^{l_2})^T \end{pmatrix}
\]
The following proposition holds:
\begin{proposition}\label{wolfedual}
The dual problem of problem (\ref{fo:maxSVM}) is
\begin{align}
\min & \frac 1 2 \begin{pmatrix}\lambda^T &\rho^T \end{pmatrix}\Sigma\Sigma^T\begin{pmatrix}\lambda \cr\rho\end{pmatrix}-e^T\lambda-e^T\rho \\
& y^T\lambda=0\\ 
&0\le  \lambda\le C_1e\\
&0\le  \rho\le C_2e
\end{align}
\end{proposition}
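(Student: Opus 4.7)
The plan is to derive the Wolfe dual by the standard route: form the Lagrangian of problem~(\ref{fo:maxSVM})--(\ref{vinc:nonneg_max}), impose stationarity with respect to the primal variables $w$, $b$, $\xi$, $\eta$, and substitute the resulting identities back into the Lagrangian to obtain a function of $(\lambda,\rho)$ only. Concretely, I would write
\begin{align*}
\lagrangian &= \tfrac12\|w\|^2 + C_1\sumi \xi_i + C_2\suminotrg \eta_i \\
& \quad - \sumi \lambda_i\bigl[y_i(w^T x_i + b) - 1 + \xi_i\bigr] - \suminotrg \rho_i\bigl[w^T z_i - 1 + \eta_i\bigr] \\
& \quad - \sumi \mu_i \xi_i - \suminotrg \theta_i \eta_i,
\end{align*}
with $\lambda,\rho,\mu,\theta\ge 0$. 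Since the primal is convex quadratic with affine constraints, Slater's condition holds (e.g.\ take $\xi_i$ and $\eta_i$ large enough) and strong duality applies, so the Wolfe dual is well defined.

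Next, I would compute the four stationarity conditions. Differentiating with respect to $w$ yields $w=\sumi \lambda_i y_i x_i + \suminotrg \rho_i z_i$; with respect to $b$ yields $\sumi \lambda_i y_i = 0$; with respect to $\xi_i$ yields $C_1 = \lambda_i + \mu_i$, hence $0\le\lambda_i\le C_1$ after eliminating $\mu_i$; with respect to $\eta_i$ yields $C_2 = \rho_i + \theta_i$, hence $0\le\rho_i\le C_2$. The first identity is exactly what expresses $w$ in terms of the dual variables through the rows of $\Sigma$: stacking $\lambda$ and $\rho$ into a single vector, one can rewrite $w^T = (\lambda^T\;\rho^T)\,\Sigma$.

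With these substitutions, the $\xi$-terms, the $\eta$-terms, and the $b$-term all vanish identically in the Lagrangian, leaving $\tfrac12\|w\|^2 - w^T w + e^T\lambda + e^T\rho = -\tfrac12\|w\|^2 + e^T\lambda + e^T\rho$. Using $\|w\|^2 = (\lambda^T\;\rho^T)\,\Sigma\Sigma^T\,(\lambda^T\;\rho^T)^T$, the dual function becomes $-\tfrac12(\lambda^T\;\rho^T)\Sigma\Sigma^T(\lambda^T\;\rho^T)^T + e^T\lambda + e^T\rho$, which, together with the box constraints on $\lambda,\rho$ and the equality $y^T\lambda=0$, is maximized. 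Flipping the sign to obtain a minimization yields exactly the statement of Proposition~\ref{wolfedual}.

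The only nontrivial step is the algebraic recognition that the quadratic form arising from $\|w\|^2$ after substitution can be written compactly as $(\lambda^T\;\rho^T)\Sigma\Sigma^T(\lambda^T\;\rho^T)^T$; this is really just unpacking the block structure of $\Sigma$, since the first $l_1$ rows of $\Sigma$ are the vectors $y_i x_i^T$ and the last $l_2$ rows are the vectors $z_i^T$, so $\Sigma\Sigma^T$ is the Gram matrix whose $(i,j)$ entry is the inner product of the $i$-th and $j$-th such vectors. Once this matrix form is in place, the rest of the derivation is essentially bookkeeping, and the remainder of the proof reduces to verifying that the box constraints derived from the nonnegativity of $\mu_i$ and $\theta_i$ together with the equality $y^T\lambda = 0$ coincide with the constraints stated in the proposition.
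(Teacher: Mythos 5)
Your proposal is correct and follows essentially the same route as the paper's own proof: form the Lagrangian, impose stationarity in $w$, $b$, $\xi$, $\eta$, derive the box constraints from eliminating $\mu$ and $\theta$, substitute back to obtain $-\tfrac12\|w\|^2+e^T\lambda+e^T\rho$, and rewrite $\|w\|^2$ as the Gram quadratic form $(\lambda^T\;\rho^T)\Sigma\Sigma^T(\lambda^T\;\rho^T)^T$ before flipping the sign to a minimization. No gaps to report.
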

\begin{proof}
The Wolfe dual of problem~\eqref{fo:maxSVM}-\eqref{vinc:nonneg_max} is given by:

\begin{align}\label{eq:newdualprob}
    \max_{w,b,\lambda,\rho,\mu,\theta}&\lagrangian\\
    &\nabla_{w}\lagrangian = 0\label{eq:w_wolfe}\\
    &\nabla_{b}\lagrangian = 0\label{eq:b_wolfe}\\
    &\nabla_{\xi}\lagrangian = 0\label{eq:xi_wolfe}\\
    &\nabla_{\eta}\lagrangian = 0\label{eq:eta_wolfe}\\
    &\lambda,\rho,\mu,\theta \ge 0\label{eq:nonneg_wolfe}
\end{align}
where $\lagrangian$ is the Lagrangian of optimization problem~(\ref{fo:maxSVM})-(\ref{vinc:nonneg_max}) that can be expressed as follows:

\begin{equation}\label{eq:lagrangian1}
    \begin{split}
        \lagrangian &= \frac{1}{2}\|w\|^2 + C_1\sumi\xi_i +C_2\suminotrg\eta_i-\sumi\lambda_i\left(y_i(w^Tx_i+b)-1+\xi_i\right)+\\
        &-\suminotrg\rho_i\left(w^Tz_i-1+\eta_i\right)-\sumi\mu_i\xi_i-\suminotrg\theta_i\eta_i
    \end{split}
\end{equation}
By rearranging terms equation~\ref{eq:lagrangian1} can be rewritten as:
\begin{equation}\label{eq:lagrangian}
\begin{split}
    &\lagrangian =\frac{1}{2}\|w\|^2 + \sumi\xi_i(C_1-\lambda_i-\mu_i)+\sumi\lambda_i+\suminotrg\rho_i+\\
    &+\suminotrg\eta_i(C_2-\rho_i-\theta_i)-w^T\left(\sumi\lambda_iy_ix_i+\suminotrg\rho_iz_i\right)-b\sumi\lambda_iy_i
\end{split}
\end{equation}
The constraints of the Wolfe Dual (equations~\ref{eq:w_wolfe}-\ref{eq:eta_wolfe}) can now be computed based on the Lagrangian function in equation~\ref{eq:lagrangian}.
The equation $\nabla_w\lagrangian = 0$ leads to an expression for $w$:
\begin{equation}\label{eq:w}
    w=\left(\sumi\lambda_iy_ix_i+\suminotrg\rho_iz_i\right),
\end{equation} whereas the equation $\nabla_b\lagrangian = 0$
leads to the constraint
\begin{equation}\label{eq:b}
    \sum_{i\in TS}\lambda_iy_i = 0
\end{equation}
Equation    $\frac{\partial \lagrangian}{\partial\xi_i} = 0 $ allows to derive $\mu_i$ as a function of $\lambda$: 
\begin{equation}\label{eq:xi}
    C_1-\lambda_i-\mu_i = 0\qquad\foralli
\end{equation}
whereas $\frac{\partial \lagrangian}{\partial\eta_i} = 0$
results in an expression of $\theta_i$ as a function of $\rho_i$
\begin{equation}\label{eq:eta}
    C_2-\rho_i-\theta_i = 0\qquad\forallinotrg
\end{equation}
Non-negativity of the multipliers $\lambda,\ \rho,\ \mu,\ \theta$ 
combined with equations~\eqref{eq:xi} and~\eqref{eq:eta} result in the following set of constraints:
\begin{align}
    &0\le\lambda_i\le C_1\quad&\foralli\label{eq:boxlambda}\\
    &0\le\rho_i\le C_2\quad&\forallinotrg\label{eq:boxrho}
\end{align}
We can plug  equations~\eqref{eq:xi} and~\eqref{eq:eta} in the objective function, getting:
\begin{equation}
\begin{split}
\label{eq:lagrangian2}
&\lagrangian =\frac{1}{2}\|w\|^2 + \sumi\xi_i(C_1-\lambda_i-\mu_i)+\sumi\lambda_i+\suminotrg\rho_i+\\
&+\suminotrg\eta_i(C_2-\rho_i-\theta_i)-w^T\left(\sumi\lambda_iy_ix_i+\suminotrg\rho_iz_i\right)-b\sumi\lambda_iy_i = \\
    &= \frac{1}{2}\|w\|^2 + \sumi 0\times\xi_i+\sumi\lambda_i+\suminotrg\rho_i+\suminotrg 0\times\eta_i-w^Tw-0\times b =\\
    &= -\frac{1}{2}\|w\|^2+\sumi\lambda_i+\suminotrg\rho_i
\end{split}
\end{equation}
The Wolfe Dual of problem~\eqref{fo:maxSVM}-\eqref{vinc:nonneg_max} can then be expressed by using equation~\eqref{eq:lagrangian2} as objective and equations~\eqref{eq:w}, \eqref{eq:b}, \eqref{eq:boxlambda}, \eqref{eq:boxrho} as constraints.

\begin{align}
    &\min \frac{1}{2}\|w\|^2-\sumi\lambda_i-\suminotrg\rho_i\label{eq:fo-wolfe1}\\
    &w=\left(\sumi\lambda_iy_ix_i+\suminotrg\rho_iz_i\right)\\ 
    &\sumi\lambda_iy_i = 0\\
    &0\le\lambda_i\le C_1\qquad\foralli\\
    &0\le\rho_i\le C_2\qquad\forallinotrg
\end{align}

Note that:
\begin{equation*}
\begin{split}
    &\|w\|^2 = w^Tw=\left(\sumi\lambda_iy_ix_i+\suminotrg\rho_iz_i\right)^T\left(\sumip\lambda_{i^{\prime}}y_{i^{\prime}}x_{i^{\prime}}+\sumipnotrg\rho_{i^{\prime}}z_{i^{\prime}}\right)=\\
    &\sumi\sumip (\lambda_i\lambda_{i^\prime}y_iy_{i^\prime}(x_i)^Tx_{i^\prime})+\suminotrg\sumipnotrg(\rho_i\rho_{i^\prime}(z_i)^Tx_{i^\prime})+2\sumi\sumipnotrg(\lambda_i\rho_{i^\prime}y_i(x_i)^Tz_{i^\prime})
\end{split}
\end{equation*}

Let us define the vector $\lambda$ and $\rho$ as the vectors of size $l_1$ and $l_2$ respectively containing $\lambda_i \ (\foralli)$ and $\rho_i \ (\forallinotrg)$. Then we define the following matrix $\Sigma\in\Re^{(l1+l2)\times n}$:
\[
\Sigma = \begin{pmatrix}y^1(x^1)^T \cr\vdots \cr y^{l_1}(x^{l_1})^T\cr (z^1)^T\cr\vdots\cr (z^{l_2})^T \end{pmatrix}
\]
The dual problem can then be rewritten as 
\begin{align}
\min & \frac 1 2 \begin{pmatrix}\lambda^T &\rho^T \end{pmatrix}\Sigma\Sigma^T\begin{pmatrix}\lambda \cr\rho\end{pmatrix}-e^T\lambda-e^T\rho \\
& y^T\lambda=0\\ 
&0\le  \lambda\le C_1e\\
&0\le  \rho\le C_2e
\end{align}
that is still a quadratic convex programming problem.
\end{proof}

\section{Numerical Results}\label{numres}
\subsection{Dataset}
We tested our approaches on five different datasets:
\begin{description}
\item[AMUSE] The protocol is based on auditory stimulus elicited by means of spatially located speakers, we have two levels, 15 rounds, six classes for each level, see Fig. \ref{fig:amuse} \cite{schreuder2011listen}. It is performed on healthy subjects and downloadable by the BNCI horizon website \cite{bnci}.
\begin{figure}
\includegraphics[scale=0.3]{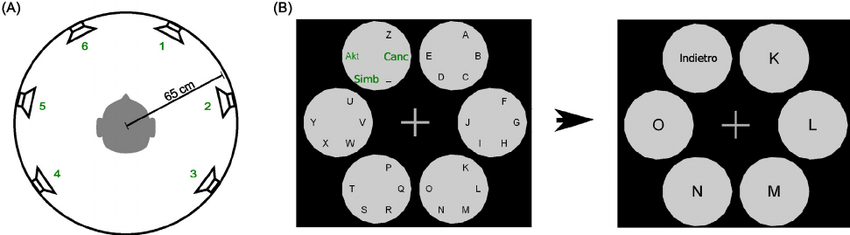}
\caption{Graphical representation of the AMUSE paradigm in which six speakers are places all over the subject. In the first level, each speaker is used to represent a set of characters, while on the second level each speaker is used to represent a single character among the previously selected set.}
\label{fig:amuse}
\end{figure}
\item[P300 Speller] The protocol is the classical P300 Speller \cite{farwell1988talking}, performed on 10 healthy subjects. 
\item[ALS P300 Speller] The protocol is the classical P300 Speller \cite{farwell1988talking}, performed on 8 patients suffering of Amyotrophic Lateral Sclerosis (ALS).
\item[MVEP] It is a visual protocol in which a moving pattern generates a movement-onset visual evoked potential that is used to recognize the user’s choice. This protocol is based on modifications of Cake Speller protocol\cite{treder2011gaze}. Sixteen healthy subjects have been involved in the study.
\item[Center Speller] It is a visual protocol where we have a visual stimulus elicited by means of three different  stimuli, two levels, 10 rounds, six classes for each level  \cite{treder2011gaze}. It is performed on 13 healthy subjects.
\item[Akimpech] It is a P300 Speller performed on 27 healthy subjects, the number of characters is 16 with 15 iterations for each character in the calibration phase, whereas in the online phase changes depending on the subject.
\end{description}
 All EEG signals were pre-processed and features were extracted with the NPXLab Suite \cite{bianchi2018}.
 Details of the datasets are reported in Table~\ref{table:dataset parameters}. Please note that we have evaluated our strategy on EEG data recorded from 95 subjects thus assessing its generalization capabilities.  
 \par
 Two principal pre-processing operations were applied:
 \begin{itemize}
     \item Electrodes selection: for the datasets Center Speller, MVEP, and AMUSE we kept the electrodes belonging to the 10-20 EEG placement. This strategy allows us to reduce both the dimension of the dataset and the overfitting;
     \item k-decimation: this technique was applied to all datasets in order to reduce overfitting. In this case, we down-sampled the EEG signal from every electrode by replacing each $k$ consecutive samples with their average value.
 \end{itemize}

 Furthermore, let's recall that the OSBF strategy requires to compute the quartiles of the training set decision values in order to assign scores to stimuli. In this scenario, we stress that, for the standard P300 Speller's paradigm, stimuli corresponding to the intensification of rows and columns are considered separately; in fact, we observed that the distribution of the decision values was different for row and column stimuli. The other paradigms we considered are based on two-levels of selection: in this case, we considered stimuli corresponding to the outer and inner level together for computing the quartiles, since we observed similar distributions of the decision values. \par

\begin{table}[t]
\centering
\resizebox{\textwidth}{!}{
\begin{tabular}{lcccccccccccl}
\hline
 Dataset&NS& \#Train & \#Test &Part.&CH& Mod.&Symb.&Stim.&Max It.&SOA&OH\\
\hline
			AMUSE \cite{schreuder2011listen}& 16 & 384 & 809  &H&61&A&30&12&15&0.175&18.25\\
            CenterSpeller \cite{treder2011gaze}&13 & 220 & 538 &H&63&V&30&12&10&0.217&8.25\\
            MVEP \cite{schaeff2012exploring}&15& 270 & 606 &H&57&V&30&12&10&0.266&11.7\\
			P300Speller \cite{arico2014influence}&10&120&60&H&8&V&36&12&8&0.250&7.25\\
			ALSP300Speller \cite{riccio2013attention} &8&120&160&ALS&16&V&36&12&10&0.250&8\\
			Akimpech \cite{ledesma2010open}& 27 & 432 & 790 & H & 10 & V & 36 & 12 & 15 & 0.188 & 4 \\\hline 
		\end{tabular}}
		\caption{Dataset parameters. The following characteristics are reported: number of subjects (NS), total number of trails in the training set, total number of trials in the test set, type of paradigm, participants (part. H = healthy, ALS = amyotrophic lateral sclerosis patient), number of channels (CH), modality (mod. A = auditory, V = visual), number of possible symbols (Symb.), total number of stimuli in the selection process (for all possible levels), the maximum number of iterations in the original setting, the SOA (stimulus onset asynchrony) and the overhead (OH pre and post-stimulus pauses).}
  	\label{table:dataset parameters}
	\end{table}
\subsection{ No stopping scenario}
As a first step, we evaluate the impact of choosing the scores by solving the problem \eqref{eq:wobj}. We compare our strategy with both the classical \textbf{DV med} approach and the SBF decision function \cite{noiES:2019} where we sum up the heuristically determined scores for all the available iterations (i.e., we use it in a no stopping fashion). We build the separating hyperplane by training a linear SVM with the package Liblinear \cite{fan2008liblinear}. We try both the L1 and L2 loss, and since there is no clear winner, we report the results obtained with both the losses. Table \ref{tab:accnostop} shows the accuracy \textemdash i.e. the percentage of correctly classified characters \textemdash obtained by the different approaches.  Findings in Table \ref{tab:accnostop} show that the OSBF outperforms the other two approaches since it reaches the highest accuracy on all the datasets. Please note that the OSBF is computationally cheap since the solution of problem \eqref{eq:wobj} is extremely fast, and does not require any cross-validation phase.  In order to further improve the accuracy, we try and build the hyperplane by solving the dual problem \eqref{eq:newdualprob}. We call this approach M-SVM. In order to solve problem \eqref{eq:newdualprob}, we apply a modification of the dual coordinate  algorithm as described in the Appendix \ref{sec:dualapp}. The results obtained by OSBF applied to the M-SVM improve only on some datasets as shown in Table \ref{tab:svsboth}, with a significant improvement on the two most difficult datasets: the one containing ALS patients and AMUSE. The intuition was that it could help only when standard SVM is not ``good enough''. In order to better understand  the contribution of the new training problem, we look at the single-subject results, dividing the participants (across all the datasets) into two classes:
\begin{description}
\item[Class 1] subjects where the standard SVM problem is better than the new M-SVM;
\item[Class 2] subjects where the standard SVM problem is worse than the new M-SVM.
\end{description}

\begin{figure}
\centering
\resizebox{0.6\linewidth}{!}{
\begin{tikzpicture}
            \draw[->] (0,0)--(5,0) node[right]{}; 
            \draw[->] (0,0)--(0,5) node[above]{}; 

           \draw (-0.5,-0.5) -- (4.75,4.75); 

           \draw (3.5,3.75) node[rotate=45,font=\tiny]
               {$w^Tx=0$};
               
            \fill[red,thick] (2.75,4.5) circle (2pt);
            \draw (2.75,4.75) node[font=\small] {$x_{trg}$}; 
            
            \fill[red,thick] (2,0.75) circle (2pt);
            \draw (2,1) node[font=\small] {$x_{i}$}; 
            
            \fill[blue,thick] (0.75,3.75) circle (2pt);
            \draw (1,3.45) node[font=\small] {$z_{i} = x_{trg}-x_i$}; 

   \end{tikzpicture}
}
\caption{Given the point $x_i$, it is possible to reinterpret constraints \ref{eq:maxconstr} as adding to the training set the points $z_i$.}\label{fig:dataaug}
\end{figure}
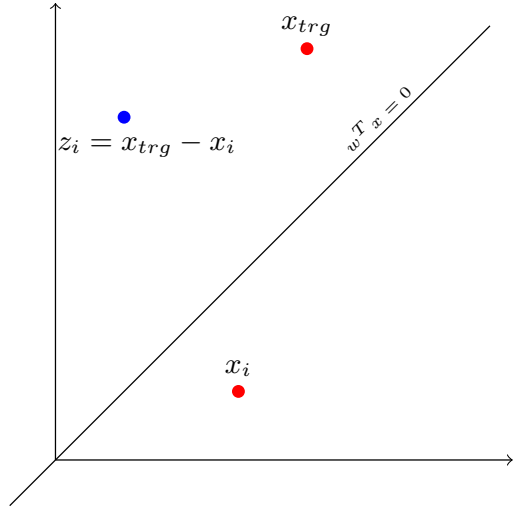

In Table \ref{tab:svsboth}, we report the average accuracy on both classes, and it is quite evident that the new training problem helps whenever the starting accuracy is not too high. When the starting accuracy is high, the performance does not change or gets worse probably for the overfitting. Interestingly, adding the constraints on the maximum decision value can be interpreted as a form of data augmentation. Indeed, if we include the bias $b$ into the vector $w$, augmenting each data point in the training set with a last component equal to $1$, we can reinterpret the constraints \eqref{eq:maxconstr} as standard sign constraints imposed on the point $z_i$. Therefore, we are augmenting our training set by adding the points $z_i$, as shown in Figure \ref{fig:dataaug}.


\begin{table}[]\centering
\resizebox{\textwidth}{!}{
\begin{tabular}{@{}lcccccc@{}}
\toprule
Dataset        & \begin{tabular}[c]{@{}c@{}}DV-med\\ L1-SVM\end{tabular} & \begin{tabular}[c]{@{}c@{}}DV-med\\L2-SVM\end{tabular}  & \begin{tabular}[c]{@{}c@{}}SBF\\L1-SVM\end{tabular} & \begin{tabular}[c]{@{}c@{}}SBF\\L2-SVM\end{tabular}     & \begin{tabular}[c]{@{}c@{}}OSBF\\L1-SVM\end{tabular}   & \begin{tabular}[c]{@{}c@{}}OSBF\\L2-SVM\end{tabular}    \\ \midrule
ALSP300Speller & 0.919          & 0.913          & 0.913      & 0.944          & 0.95           & \textbf{0.963} \\
P300Speller    & 0.95           & \textbf{0.967} & 0.95       & 0.95           & \textbf{0.967} & \textbf{0.967} \\
CenterSpeller  & \textbf{0.957} & 0.953          & 0.953      & 0.94           & 0.953          & 0.949          \\
AMUSE          & 0.741          & 0.756          & 0.752      & 0.753          & 0.763          & \textbf{0.796} \\
MVEP           & 0.754          & 0.743          & 0.739      & 0.747          & \textbf{0.777} & 0.76           \\
Akimpech       & 0.954          & 0.967          & 0.955      & \textbf{0.978} & 0.962          & \textbf{0.978} \\ \bottomrule
\end{tabular}
}\caption{Accuracy comparison between OSBF, SBF and standard approach DV-med}\label{tab:accnostop}
\end{table}

\begin{table}[]\centering
\resizebox{\textwidth}{!}{
\begin{tabular}{@{}lccc@{}}
\toprule
Dataset        & OSBF L1-SVM    & OSBF L2-SVM    & OSBF M-SVM     \\ \midrule
ALSP300Speller & 0.95           & 0.962          & \textbf{0.975} \\
P300Speller    & \textbf{0.967} & \textbf{0.967} & \textbf{0.967} \\
CenterSpeller  & 0.953          & 0.949          & \textbf{0.96}  \\
AMUSE          & 0.763          & 0.796          & \textbf{0.806} \\
MVEP           & 0.777          & 0.76           & \textbf{0.783} \\
Akimpech       & 0.962          & \textbf{0.978} & 0.974          \\ \bottomrule
\end{tabular}
}\caption{Accuracy obtained by OSBF with the different hyperplanes: standard SVM with L1 loss, standard SVM with L2 loss, and the new M-SVM obtained by solving problem \eqref{eq:newdualprob}}\label{tab:svsboth}
\end{table}

\begin{table}[]\centering
\resizebox{\textwidth}{!}{
\begin{tabular}{@{}l|cc|cc|cc@{}}
\toprule
Dataset        & \begin{tabular}[c]{@{}c@{}}Class 1\\ L2-SVM\end{tabular} & \begin{tabular}[c]{@{}c@{}}Class 1\\ M-SVM\end{tabular} & \begin{tabular}[c]{@{}c@{}}Class 2\\ L2-SVM\end{tabular} & \begin{tabular}[c]{@{}c@{}}Class 2\\ M-SVM\end{tabular} & \begin{tabular}[c]{@{}c@{}}Tot\\L2-SVM\end{tabular} & \begin{tabular}[c]{@{}c@{}}Tot\\M-SVM\end{tabular} \\ \midrule
ALSP300Speller & 0.95           & 0.9           & 0.9            & 0.975         & 0.963          & 0.975         \\
CenterSpeller  & 1              & 0.977         & 0.902          & 0.928         & 0.95           & 0.96          \\
AMUSE          & 0.865          & 0.842         & 0.695          & 0.738         & 0.796          & 0.806         \\
MVEP           & 0.859          & 0.827         & 0.714          & 0.789         & 0.76           & 0.783         \\
Akimpech       & 0.952          & 0.919         & 0.923          & 0.974         & 0.978          & 0.974         \\ \bottomrule
\end{tabular}
}\caption{Here we divide the subjects into two sets: the ones where M-SVM performs worse than the standard L2 SVM  and the ones where M-SVM performs better than the standard SVM and compute the average accuracy. Results on P300Speller dataset are not shown since in this case L2-SVM and M-SVM perform exactly the same and, so, class 1 and 2 are empty. }
\end{table}

\subsection{Early stopping scenario}

As a second step, we consider the early stopping version of both the SBF (that is the current state of the art for early stopping methods) and the OSBF.
In order to evaluate the performance of the proposed method with the respect to the number of iterations needed for an accurate classification also the theoretical Information transfer rate (ITR, bit/min) has been computed. The ITR is a communication measure based on Shannon channel theory with some simplifying assumptions. It can be computed by dividing the number of bits transmitted per trial (or bit rate, bits/trial) by the trial duration in minute. We compute the bit-rate, using the definition proposed in \cite{wolpaw1998eeg}, as: 
	\begin{equation}\label{bitrate}
	\mbox{B}=\log_{2}N + P\log_{2}P + (1-P)\log_{2}\frac{(1-P)}{(N-1)},
	\end{equation}
	where $N$ is the number of possible symbols in the speller grid and $P$ is the probability that the target symbol is accurately classified at the end of a trial. From \eqref{bitrate} the ITR is computed as:
	\begin{equation}\label{ITR}
	\mbox{ITR}= \frac{B}{\mbox{trial duration}}
	\end{equation}
	where
	\begin{equation}\label{trail duration}
	\mbox{trial duration}= \frac{\mbox{SOA} \cdot f_{s} \cdot \overline{i}}{60}\mbox{ min.}
	\end{equation} 
	In \eqref{trail duration}, SOA refers to the stimulus-onset asynchrony; $f_{s}$ represents the number of stimuli in each stimulation sequence and $\overline{i}$ is the mean number of used iterations to select a symbol. In Tables \ref{tab:accearlystop}, \ref{tab:accearlystop_both}, \ref{tab:ITRearlystop} and \ref{tab:ITRearlystop_both} the results obtained with the early stopping setting are shown. Findings in Table \ref{tab:accearlystop} further corroborates the potentials of the OSBF since its outperforms the SBF, no matter what hyperplane is used. In Table \ref{tab:accearlystop_both} we compare the early stopping results in terms of accuracy obtained with the OSBF with the different hyperplanes (L1-SVM, L2-SVM and M-SVM): we can notice that, in this case, the M-SVM reaches a higher level of accuracy than the other methods among almost all datasets. Tables \ref{tab:ITRearlystop} and \ref{tab:accearlystop_both} show the results in terms of theoretical ITR. In this case, we can see that all the strategies reach comparable results and there is not a clear winner. We can then conclude that the OSBF strategy is a more conservative approach than the SBF, since it manages to keep a high level of accuracy preserving the communication speed.

\begin{table}[]\centering
\resizebox{\textwidth}{!}{
\begin{tabular}{@{}lcccc@{}}
\toprule
Dataset        & SBF L1-SVM & SBF L2-SVM & OSBF L1-SVM    & OSBF L2-SVM    \\ \midrule
ALSP300Speller & 0.85       & 0.863      & \textbf{0.925} & \textbf{0.925} \\
P300Speller    & 0.95       & 0.95       & \textbf{0.967} & 0.95           \\
CenterSpeller  & 0.903      & 0.893      & \textbf{0.944} & 0.931          \\
AMUSE          & 0.636      & 0.647      & \textbf{0.756} & 0.744          \\
MVEP           & 0.712      & 0.712       & \textbf{0.748} & 0.744          \\
Akimpech       & 0.911      & 0.934      & 0.943          & \textbf{0.948} \\ \bottomrule
\end{tabular}
}\caption{Accuracy obtained in the early stopping setting by the SBF and OSBF using the separation hyperplane given by a linear SVM with both L1 and L2 losses}\label{tab:accearlystop}
\end{table}

\begin{table}[]\centering
\resizebox{\textwidth}{!}{
\begin{tabular}{@{}lccc@{}}
\toprule
Dataset        & OSBF L1-SVM    & OSBF L2-SVM    & OSBF M-SVM     \\ \midrule
ALSP300Speller & 0.925          & 0.925          & \textbf{0.944} \\
P300Speller    & \textbf{0.967} & 0.95           & \textbf{0.967} \\
CenterSpeller  & 0.944          & 0.931          & \textbf{0.95}  \\
AMUSE          & 0.756          & 0.744          & \textbf{0.76}  \\
MVEP           & 0.748          & 0.744          & \textbf{0.760} \\
Akimpech       & 0.943          & \textbf{0.948} & 0.942          \\ \bottomrule
\end{tabular}
}\caption{Accuracy obtained in the early stopping setting by the OSBF with the different hyperplanes: standard SVM with L1 loss, standard SVM with L2 loss, and the new M-SVM obtained by solving problem \eqref{eq:newdualprob}.}\label{tab:accearlystop_both}
\end{table}


\begin{table}[]\centering
\resizebox{\textwidth}{!}{
\begin{tabular}{@{}lcccc@{}}
\toprule
Dataset        & SBF L1-SVM      & SBF L2-SVM      & OSBF L1-SVM & OSBF L2-SVM     \\ \midrule
ALSP300Speller & 20.187          & 19.234          & 20.187      & \textbf{20.716} \\
P300Speller    & \textbf{34.38}  & 34.086          & 32.343      & 30.356          \\
CenterSpeller  & \textbf{28.372} & 28.042          & 27.76       & 27.649          \\
AMUSE          & 12.67           & 12.929          & 14.490      & \textbf{15.232} \\
MVEP           & 9.434           & 9.252           & 10.245       & \textbf{10.626}  \\
Akimpech       & 34.593          & \textbf{38.464} & 35.061      & 36.532          \\ \bottomrule
\end{tabular}
}\caption{ITR ($bit/min$) obtained in the early stopping setting by the SBF and OSBF using the separation hyperplan given by a linear SVM with both L1 and L2 losses}\label{tab:ITRearlystop}
\end{table}

\begin{table}[]\centering
\resizebox{\textwidth}{!}{
\begin{tabular}{@{}lccc@{}}
\toprule
Dataset        & OSBF L1-SVM     & OSBF L2-SVM     & OSBF M-SVM      \\ \midrule
ALSP300Speller & 20.187          & 20.716          & \textbf{21.79}  \\
P300Speller    & \textbf{32.343} & 30.356          & 32.11           \\
CenterSpeller  & 27.76           & 27.649          & \textbf{27.909} \\
AMUSE          & 14.490          & 15.232          & \textbf{15.35}  \\
MVEP           & 10.245           & \textbf{10.626}  & 10.431          \\
Akimpech       & 35.061          & \textbf{36.532} & 34.811          \\ \bottomrule
\end{tabular}
}\caption{ITR ($bit/min$) obtained in the early stopping setting by the OSBF with the different hyperplanes: standard SVM with L1 loss, standard SVM with L2 loss, and the new M-SVM obtained by solving problem \eqref{eq:newdualprob}.}\label{tab:ITRearlystop_both}
\end{table}

\section{Conclusions}\label{sec:conc}
This paper focuses on the classification problem that arises in many BCI protocols. The idea was to exploit the knowledge on the protocol in order to improve the classification accuracy, and the communication speed of the BCI. This aim has been reached by means of two different ingredients:
\begin{description}
\item[(i)] the use of a MILP problem to assign a ``reliability score'' to the classification of each stimulus in every iteration
\item[(i)] the definition of a new training problem that keeps into account that the target class is assigned to the stimuls having the maximum decision value.
\end{description}
Both novelty elements have been applied in two different scenarios: a first one where accuracy was the main focus and  all the iterations available for each subject were used both in the calibration and the online phase; a second one where the focus was to improve the communication speed, and hence an early stopping strategy was implemented in the online phase.
In order to evaluate the approaches we conducted an extensive experimentation on datasets coming from different protocols and including both healthy subjects and ALS patients. The results show how we were able to improve accuracy and ITR on all the datasets, proving once more that combining machine learning tools to problem knowledge can significantly improve performances.


\appendix
\section{Dual Coordinate Descent Algorithm}\label{sec:dualapp}
In this section we describe how we modified the Dual Coordinate Descent Algorithm proposed in \cite{dualcoordinate} in order to  find the separating hyperplane for problem~\ref{fo:maxSVM}-\ref{vinc:nonneg_max}. The Dual Coordinate Descent Algorithm basically solves the dual problem applying a Gauss Seidel decomposition method where each variable constitutes a block, and the subproblem with respect to a single variable is globally solved analytically. We adapt the algorithm by modifying the following points:
\begin{itemize}
    \item how the gradient of the objective function is computed;
    \item how the hyperplane is updated.
\end{itemize}

In particular, we can write the objective function $f$ and the separating hyperplane $w$ as:
\begin{align}\label{eq:dcd_f}
    &f = \frac{1}{2}\sum_{i=1}^{l_1}\sum_{j=1}^{l_1}\lambda_i\lambda_j y_i y_j x_i^T x_j+\sum_{i=1}^{l_1}\sum_{j=1}^{l_2}\lambda_i\rho_j y_i x_i^T z_j + \frac{1}{2}\sum_{i=1}^{l_2}\sum_{j=1}^{l_2}\rho_i\rho_j z_i z_j-\sum_{i=1}^{l_1}\lambda_i-\sum_{i=1}^{l_2}\rho_j\\\label{eq:dcd_w}
    &w = \sum_{i=1}^{l_1} \lambda_i y_i x_i + \sum_{i=1}^{l_2} \rho_i z_i
\end{align}

Let's define the vector $\alpha = \begin{bmatrix}\lambda^T & \rho^T\end{bmatrix}\in {\rm I\!R}^{l_1\times l_2}$. Equations~\ref{eq:dcd_f} and~\ref{eq:dcd_w} can equivalently be defined with respect to vector $\alpha$.
We can then express the $i$-th component of the gradient of $f(\alpha)$ as:
\begin{equation}
\nabla_i f = \begin{cases}
\sum_{j=1}^{l_1} \alpha_j y_i y_j x_i^T x_j + \sum_{j=l_1}^{l_1+l_2} \alpha_j y_i x_i^T z_j -1 \qquad \text{if }i<l_1\\
\sum_{j=1}^{l_1} \alpha_j y_j x_j^T z_j + \sum_{j=l_11}^{l_1+l_2} \alpha_j z_i^T z_j -1 \qquad \text{otherwise}
\end{cases}
\end{equation}
which can be rewritten as:
\begin{equation}
\nabla_i f = \begin{cases}
y_iw^T x_i-1 \qquad \text{if }i<l_1\\
w^T z_i -1\qquad \text{otherwise}
\end{cases}
\end{equation}

\begin{algorithm}\caption{A Dual Coordinate Descent Algorithm for problem~\ref{fo:maxSVM}-\ref{vinc:nonneg_max}}
\begin{algorithmic}
\STATE $k \leftarrow 0$
\STATE $\alpha \leftarrow \begin{bmatrix}\lambda^T & \rho^T\end{bmatrix}$
\STATE $\alpha^0 \leftarrow 0, w^0 \leftarrow 0$
\WHILE{$\alpha^k$ not optimal}
\STATE $\alpha^{k,1}\leftarrow \alpha^k, w^{k,1}\leftarrow w^k$
\FOR{$i=0$ to $l_1+l_2$}
\STATE $\nabla_i f(\alpha^{k,i}) = \begin{cases} y_i(w^{k,i})^Tx_i-1 \qquad \text{if }i<l_1;\\
(w^{k,i})^Tz_i-1 \qquad \text{otherwise}
\end{cases}$
\STATE $\nabla_i^P f(\alpha^{k,i}) = \begin{cases}
min(\nabla_i f(\alpha^{k,i}), 0) \qquad \text{if }\alpha^{k,i} = 0\\
max(\nabla_i f(\alpha^{k,i}), 0) \qquad \text{if } \alpha^{k,i} = C\\
\nabla_i f(\alpha^{k,i}) \qquad \text{otherwise}
\end{cases}$
\IF{$\nabla_i^P f(\alpha^{k,i})==0$}
\STATE $\alpha^{k+1}_{i}=\alpha^{k}_{i}, w^{k+1,i}=w^{k,i}$
\ELSE
\STATE $\alpha^{k+1}_{i} = min\left(C, max\left(0,\alpha^{k}_{i}-\frac{\nabla_i f(\alpha^{k,i})}{Q_{i,i}}\right)\right)$
\STATE $w^{k,i+1} = \begin{cases}
w^{k,i}+y_i\left(\alpha^{k+1}_{i}-\alpha^{k}_{i}\right)x_i\qquad\text{if }i<l_1;\\
w_{k,i}+\left(\alpha^{k+1}_{i}-\alpha^{k}_{i}\right)z_i\qquad\text{otherwise};
\end{cases}$
\STATE $\alpha^{k,i+1} = \begin{pmatrix}\alpha^k_i, &\dots& \alpha^{k+1}_i, & \alpha^{k}_{i+1}, &\dots & \alpha^k_{l_1+l_2}\end{pmatrix}$
\ENDIF
\ENDFOR
\ENDWHILE
\end{algorithmic}
\end{algorithm}

\section{Detailed numerical results}
As a supplement, we provide the detailed results obtained for all subjects for all considered datasets. In Table \ref{tab:det_accunostopping} we provide the results obtained in the no stopping setting, while in Tables \ref{tab:det_accuearlystopping} and \ref{tab:det_ITRearlystopping} the results for the early stopping setting are reported.

{\setlength\tabcolsep{3.7pt}\footnotesize
\begin{longtable}{@{}ll|ccccccccc@{}}
\toprule
Dataset & Subj. & \begin{tabular}[c]{@{}c@{}}DV-med\\ L1-SVM\end{tabular} & \begin{tabular}[c]{@{}c@{}}DV-med\\ L2-SVM\end{tabular} & \begin{tabular}[c]{@{}c@{}}DV-med\\ M-SVM\end{tabular} & \begin{tabular}[c]{@{}c@{}}SBF\\ L1-SVM\end{tabular} & \begin{tabular}[c]{@{}c@{}}SBF\\ L2-SVM\end{tabular} & \begin{tabular}[c]{@{}c@{}}SBF\\ M-SVM\end{tabular} & \begin{tabular}[c]{@{}c@{}}OSBF\\ L1-SVM\end{tabular} & \begin{tabular}[c]{@{}c@{}}OSBF\\ L2-SVM\end{tabular} & \begin{tabular}[c]{@{}c@{}}OSBF\\ M-SVM\end{tabular} \\ \midrule
ALSP300Speller & 1 & 0.850 & 0.900 & 0.850 & 0.900 & 0.950 & 0.900 & 0.950 & 0.900 & 0.950 \\
ALSP300Speller & 2 & 0.800 & 0.850 & 0.850 & 0.850 & 0.850 & 0.850 & 0.850 & 0.900 & 1.000 \\
ALSP300Speller & 3 & 0.800 & 0.800 & 0.850 & 0.950 & 1.000 & 0.950 & 0.900 & 0.950 & 0.900 \\
ALSP300Speller & 4 & 0.950 & 0.800 & 0.900 & 0.850 & 0.950 & 0.850 & 0.950 & 0.950 & 0.950 \\
ALSP300Speller & 5 & 1.000 & 1.000 & 1.000 & 0.950 & 0.900 & 0.950 & 1.000 & 1.000 & 1.000 \\
ALSP300Speller & 6 & 1.000 & 1.000 & 1.000 & 0.900 & 0.950 & 0.950 & 1.000 & 1.000 & 1.000 \\
ALSP300Speller & 7 & 0.950 & 0.950 & 0.950 & 0.900 & 0.950 & 0.950 & 0.950 & 1.000 & 1.000 \\
ALSP300Speller & 8 & 1.000 & 1.000 & 1.000 & 1.000 & 1.000 & 1.000 & 1.000 & 1.000 & 1.000 \\
P300Speller & 1 & 1.000 & 1.000 & 1.000 & 1.000 & 1.000 & 1.000 & 1.000 & 1.000 & 1.000 \\
P300Speller & 2 & 1.000 & 1.000 & 1.000 & 1.000 & 1.000 & 1.000 & 1.000 & 1.000 & 1.000 \\
P300Speller & 3 & 1.000 & 1.000 & 1.000 & 1.000 & 1.000 & 1.000 & 1.000 & 1.000 & 1.000 \\
P300Speller & 4 & 0.833 & 0.833 & 0.833 & 0.833 & 0.833 & 0.833 & 0.833 & 0.833 & 0.833 \\
P300Speller & 5 & 1.000 & 1.000 & 1.000 & 1.000 & 1.000 & 1.000 & 1.000 & 1.000 & 1.000 \\
P300Speller & 6 & 0.833 & 1.000 & 0.833 & 0.833 & 1.000 & 0.833 & 1.000 & 1.000 & 1.000 \\
P300Speller & 7 & 1.000 & 1.000 & 1.000 & 1.000 & 0.833 & 1.000 & 1.000 & 1.000 & 1.000 \\
P300Speller & 8 & 0.833 & 0.833 & 0.833 & 0.833 & 0.833 & 0.833 & 0.833 & 0.833 & 0.833 \\
P300Speller & 9 & 1.000 & 1.000 & 1.000 & 1.000 & 1.000 & 1.000 & 1.000 & 1.000 & 1.000 \\
P300Speller & 10 & 1.000 & 1.000 & 1.000 & 1.000 & 1.000 & 1.000 & 1.000 & 1.000 & 1.000 \\
CenterSpeller & VPiac & 0.929 & 0.952 & 0.929 & 0.905 & 0.905 & 0.905 & 0.952 & 0.929 & 0.929 \\
CenterSpeller & VPiba & 0.974 & 0.947 & 0.974 & 0.947 & 0.974 & 0.947 & 0.947 & 0.947 & 0.974 \\
CenterSpeller & VPibb & 1.000 & 1.000 & 1.000 & 0.974 & 1.000 & 0.947 & 1.000 & 1.000 & 1.000 \\
CenterSpeller & VPibc & 1.000 & 1.000 & 1.000 & 0.977 & 1.000 & 0.977 & 1.000 & 1.000 & 0.977 \\
CenterSpeller & VPibd & 0.976 & 0.927 & 0.951 & 0.976 & 0.951 & 0.976 & 0.927 & 0.927 & 0.951 \\
CenterSpeller & VPibe & 1.000 & 1.000 & 1.000 & 1.000 & 0.969 & 1.000 & 1.000 & 1.000 & 1.000 \\
CenterSpeller & VPibf & 1.000 & 1.000 & 1.000 & 0.977 & 0.977 & 0.977 & 0.977 & 1.000 & 1.000 \\
CenterSpeller & VPibg & 1.000 & 1.000 & 1.000 & 1.000 & 1.000 & 1.000 & 1.000 & 1.000 & 1.000 \\
CenterSpeller & VPibh & 0.750 & 0.769 & 0.769 & 0.827 & 0.635 & 0.788 & 0.808 & 0.788 & 0.827 \\
CenterSpeller & VPibi & 0.940 & 0.940 & 0.960 & 0.960 & 0.980 & 0.960 & 0.960 & 0.960 & 0.980 \\
CenterSpeller & VPibj & 0.930 & 0.930 & 0.907 & 0.860 & 0.884 & 0.860 & 0.837 & 0.814 & 0.837 \\
CenterSpeller & VPica & 0.973 & 0.946 & 0.973 & 1.000 & 0.973 & 1.000 & 1.000 & 1.000 & 1.000 \\
CenterSpeller & VPsaf & 0.974 & 0.974 & 1.000 & 0.974 & 0.974 & 0.949 & 0.974 & 0.974 & 1.000 \\
AMUSE & VPfar & 0.711 & 0.644 & 0.711 & 0.600 & 0.556 & 0.578 & 0.711 & 0.711 & 0.711 \\
AMUSE & VPfau & 0.845 & 0.845 & 0.810 & 0.828 & 0.862 & 0.845 & 0.862 & 0.862 & 0.879 \\
AMUSE & VPfav & 0.849 & 0.849 & 0.849 & 0.830 & 0.849 & 0.830 & 0.849 & 0.849 & 0.849 \\
AMUSE & VPfaw & 0.750 & 0.806 & 0.861 & 0.806 & 0.861 & 0.833 & 0.889 & 0.889 & 0.917 \\
AMUSE & VPfax & 0.756 & 0.756 & 0.732 & 0.768 & 0.744 & 0.720 & 0.756 & 0.756 & 0.793 \\
AMUSE & VPfaz & 0.969 & 0.969 & 0.969 & 0.969 & 0.969 & 0.969 & 0.969 & 0.969 & 0.969 \\
AMUSE & VPfca & 0.974 & 0.947 & 0.974 & 0.974 & 1.000 & 0.974 & 0.947 & 1.000 & 0.974 \\
AMUSE & VPfcb & 0.695 & 0.712 & 0.780 & 0.847 & 0.746 & 0.746 & 0.712 & 0.847 & 0.831 \\
AMUSE & VPfcc & 0.969 & 0.969 & 0.938 & 0.938 & 0.969 & 0.938 & 0.969 & 0.969 & 0.969 \\
AMUSE & VPfcd & 0.743 & 0.857 & 0.743 & 0.800 & 0.857 & 0.857 & 0.800 & 0.914 & 0.886 \\
AMUSE & VPfcg & 0.682 & 0.591 & 0.621 & 0.712 & 0.727 & 0.697 & 0.697 & 0.697 & 0.697 \\
AMUSE & VPfch & 0.259 & 0.379 & 0.397 & 0.276 & 0.241 & 0.276 & 0.241 & 0.431 & 0.500 \\
AMUSE & VPfcj & 0.391 & 0.551 & 0.536 & 0.449 & 0.377 & 0.493 & 0.536 & 0.580 & 0.638 \\
AMUSE & VPfck & 0.736 & 0.717 & 0.736 & 0.698 & 0.698 & 0.660 & 0.660 & 0.698 & 0.679 \\
AMUSE & VPfcm & 0.617 & 0.600 & 0.633 & 0.683 & 0.717 & 0.600 & 0.700 & 0.650 & 0.700 \\
AMUSE & VPkw & 0.909 & 0.909 & 0.909 & 0.848 & 0.879 & 0.909 & 0.909 & 0.909 & 0.909 \\
MVEP & VPfat & 0.821 & 0.923 & 0.897 & 0.872 & 0.923 & 0.821 & 0.949 & 0.897 & 0.897 \\
MVEP & VPgdf & 0.647 & 0.618 & 0.618 & 0.647 & 0.647 & 0.588 & 0.706 & 0.618 & 0.647 \\
MVEP & VPgdg & 0.821 & 0.744 & 0.821 & 0.692 & 0.769 & 0.718 & 0.769 & 0.744 & 0.795 \\
MVEP & VPiac & 0.667 & 0.667 & 0.727 & 0.606 & 0.697 & 0.667 & 0.727 & 0.697 & 0.697 \\
MVEP & VPiba & 0.519 & 0.500 & 0.519 & 0.537 & 0.537 & 0.593 & 0.556 & 0.500 & 0.648 \\
MVEP & VPibe & 1.000 & 0.973 & 1.000 & 0.973 & 0.973 & 0.946 & 1.000 & 1.000 & 1.000 \\
MVEP & VPibs & 0.760 & 0.780 & 0.760 & 0.820 & 0.660 & 0.740 & 0.840 & 0.840 & 0.820 \\
MVEP & VPibt & 0.810 & 0.833 & 0.810 & 0.786 & 0.833 & 0.786 & 0.762 & 0.786 & 0.786 \\
MVEP & VPibu & 0.468 & 0.553 & 0.489 & 0.426 & 0.468 & 0.447 & 0.532 & 0.489 & 0.489 \\
MVEP & VPibv & 0.833 & 0.806 & 0.833 & 0.833 & 0.750 & 0.889 & 0.889 & 0.806 & 0.944 \\
MVEP & VPibw & 0.975 & 0.925 & 0.975 & 0.975 & 0.975 & 1.000 & 0.925 & 0.975 & 0.950 \\
MVEP & VPibx & 0.917 & 0.861 & 0.917 & 0.917 & 0.917 & 0.944 & 0.972 & 0.917 & 0.944 \\
MVEP & VPiby & 0.684 & 0.632 & 0.684 & 0.737 & 0.711 & 0.684 & 0.737 & 0.763 & 0.711 \\
MVEP & VPice & 0.659 & 0.705 & 0.682 & 0.614 & 0.614 & 0.591 & 0.591 & 0.659 & 0.659 \\
MVEP & VPicv & 0.730 & 0.622 & 0.676 & 0.649 & 0.730 & 0.649 & 0.703 & 0.703 & 0.757 \\
Akimpech & ACS & 0.923 & 0.962 & 0.962 & 0.923 & 0.962 & 0.962 & 0.962 & 0.962 & 0.962 \\
Akimpech & APM & 1.000 & 1.000 & 1.000 & 1.000 & 1.000 & 1.000 & 1.000 & 1.000 & 1.000 \\
Akimpech & ASG & 1.000 & 1.000 & 1.000 & 1.000 & 1.000 & 1.000 & 1.000 & 1.000 & 1.000 \\
Akimpech & ASR & 0.973 & 0.919 & 0.865 & 0.919 & 0.973 & 0.811 & 0.946 & 0.919 & 0.865 \\
Akimpech & CLL & 0.974 & 1.000 & 1.000 & 0.949 & 0.949 & 0.949 & 0.974 & 0.923 & 0.974 \\
Akimpech & CLR & 1.000 & 1.000 & 1.000 & 1.000 & 1.000 & 1.000 & 1.000 & 1.000 & 1.000 \\
Akimpech & DCM & 0.980 & 0.980 & 0.980 & 1.000 & 0.959 & 0.959 & 1.000 & 1.000 & 0.980 \\
Akimpech & DLP & 0.957 & 0.957 & 0.913 & 0.913 & 1.000 & 1.000 & 0.957 & 1.000 & 1.000 \\
Akimpech & DMA & 0.833 & 0.800 & 0.800 & 0.833 & 0.833 & 0.833 & 0.900 & 0.867 & 0.833 \\
Akimpech & ELC & 1.000 & 1.000 & 1.000 & 1.000 & 1.000 & 1.000 & 1.000 & 1.000 & 1.000 \\
Akimpech & FSZ & 0.867 & 0.933 & 0.900 & 0.967 & 1.000 & 0.933 & 0.933 & 0.967 & 0.967 \\
Akimpech & GCE & 0.964 & 0.964 & 0.929 & 0.929 & 0.929 & 0.929 & 0.893 & 0.964 & 0.964 \\
Akimpech & ICE & 1.000 & 1.000 & 1.000 & 1.000 & 1.000 & 1.000 & 1.000 & 1.000 & 1.000 \\
Akimpech & IZH & 0.950 & 0.950 & 0.950 & 0.975 & 0.975 & 0.975 & 1.000 & 0.975 & 0.975 \\
Akimpech & JCR & 0.667 & 0.944 & 0.944 & 0.889 & 1.000 & 1.000 & 0.778 & 1.000 & 1.000 \\
Akimpech & JLD & 1.000 & 1.000 & 1.000 & 1.000 & 1.000 & 0.957 & 1.000 & 1.000 & 1.000 \\
Akimpech & JMR & 1.000 & 1.000 & 1.000 & 1.000 & 1.000 & 1.000 & 1.000 & 1.000 & 1.000 \\
Akimpech & JSC & 0.962 & 0.962 & 0.962 & 0.808 & 0.962 & 0.923 & 0.923 & 1.000 & 0.962 \\
Akimpech & JST & 1.000 & 1.000 & 1.000 & 0.971 & 1.000 & 0.971 & 0.971 & 1.000 & 1.000 \\
Akimpech & LAC & 1.000 & 1.000 & 1.000 & 1.000 & 1.000 & 1.000 & 1.000 & 1.000 & 1.000 \\
Akimpech & LAG & 0.977 & 0.953 & 0.930 & 0.977 & 0.977 & 0.977 & 0.953 & 0.977 & 0.953 \\
Akimpech & LGP & 1.000 & 1.000 & 1.000 & 1.000 & 1.000 & 1.000 & 1.000 & 1.000 & 1.000 \\
Akimpech & LPS & 1.000 & 1.000 & 1.000 & 1.000 & 1.000 & 1.000 & 1.000 & 1.000 & 1.000 \\
Akimpech & MoMR & 1.000 & 1.000 & 1.000 & 1.000 & 1.000 & 1.000 & 1.000 & 1.000 & 1.000 \\
Akimpech & PGA & 0.818 & 0.841 & 0.818 & 0.841 & 0.886 & 0.864 & 0.864 & 0.886 & 0.886 \\
Akimpech & WFG & 0.907 & 0.953 & 0.953 & 0.930 & 1.000 & 0.977 & 0.977 & 0.977 & 0.977 \\
Akimpech & XCL & 1.000 & 1.000 & 1.000 & 0.952 & 1.000 & 1.000 & 0.952 & 1.000 & 1.000 \\ \bottomrule
\caption{Detail of the Accuracy results obtains with all no stopping framework mentioned}\label{tab:det_accunostopping}
\end{longtable}}

{\setlength\tabcolsep{3.7pt}\small
\begin{longtable}{@{}ll|ccccccccc@{}}
\toprule
Dataset & Subj. & \begin{tabular}[c]{@{}c@{}}SBF\\ L1-SVM\end{tabular} & \begin{tabular}[c]{@{}l@{}}SBF\\ L2-SVM\end{tabular} & \begin{tabular}[c]{@{}c@{}}SBF\\ M-SVM\end{tabular} & \begin{tabular}[c]{@{}c@{}}OSBF\\ L1-SVM\end{tabular} & \begin{tabular}[c]{@{}l@{}}OSBF\\ L2-SVM\end{tabular} & \begin{tabular}[c]{@{}c@{}}OSBF\\ M-SVM\end{tabular} \\ \midrule
ALSP300Speller & 1 & 0.950 & 0.800 & 0.900 & 0.950 & 0.850 & 1.000 \\
ALSP300Speller & 2 & 0.750 & 0.750 & 0.700 & 0.800 & 0.900 & 0.850 \\
ALSP300Speller & 3 & 0.850 & 0.850 & 0.750 & 0.800 & 0.950 & 0.800 \\
ALSP300Speller & 4 & 0.750 & 0.900 & 0.750 & 0.900 & 0.900 & 0.900 \\
ALSP300Speller & 5 & 0.850 & 0.850 & 0.850 & 1.000 & 0.950 & 1.000 \\
ALSP300Speller & 6 & 0.800 & 0.850 & 0.850 & 1.000 & 0.900 & 1.000 \\
ALSP300Speller & 7 & 0.900 & 0.900 & 0.900 & 0.950 & 0.950 & 1.000 \\
ALSP300Speller & 8 & 0.950 & 1.000 & 1.000 & 1.000 & 1.000 & 1.000 \\
P300Speller & 1 & 0.833 & 0.833 & 0.833 & 1.000 & 1.000 & 1.000 \\
P300Speller & 2 & 1.000 & 1.000 & 1.000 & 1.000 & 1.000 & 1.000 \\
P300Speller & 3 & 1.000 & 1.000 & 1.000 & 1.000 & 0.833 & 1.000 \\
P300Speller & 4 & 0.833 & 0.833 & 0.833 & 0.833 & 0.833 & 0.833 \\
P300Speller & 5 & 1.000 & 1.000 & 1.000 & 1.000 & 1.000 & 1.000 \\
P300Speller & 6 & 0.833 & 1.000 & 0.833 & 1.000 & 1.000 & 1.000 \\
P300Speller & 7 & 1.000 & 0.833 & 1.000 & 1.000 & 1.000 & 1.000 \\
P300Speller & 8 & 1.000 & 1.000 & 1.000 & 0.833 & 0.833 & 0.833 \\
P300Speller & 9 & 1.000 & 1.000 & 1.000 & 1.000 & 1.000 & 1.000 \\
P300Speller & 10 & 1.000 & 1.000 & 1.000 & 1.000 & 1.000 & 1.000 \\
CenterSpeller & VPiac & 0.833 & 0.857 & 0.810 & 0.881 & 0.881 & 0.857 \\
CenterSpeller & VPiba & 0.921 & 0.921 & 0.921 & 0.947 & 0.947 & 0.974 \\
CenterSpeller & VPibb & 0.947 & 0.921 & 0.947 & 1.000 & 0.974 & 1.000 \\
CenterSpeller & VPibc & 0.932 & 0.909 & 0.909 & 1.000 & 1.000 & 0.977 \\
CenterSpeller & VPibd & 0.878 & 0.927 & 0.878 & 0.927 & 0.951 & 0.927 \\
CenterSpeller & VPibe & 0.969 & 0.938 & 0.969 & 0.969 & 1.000 & 0.969 \\
CenterSpeller & VPibf & 0.907 & 0.907 & 0.907 & 1.000 & 0.930 & 1.000 \\
CenterSpeller & VPibg & 1.000 & 1.000 & 1.000 & 1.000 & 1.000 & 1.000 \\
CenterSpeller & VPibh & 0.769 & 0.615 & 0.788 & 0.827 & 0.788 & 0.846 \\
CenterSpeller & VPibi & 0.900 & 0.940 & 0.960 & 0.960 & 0.940 & 0.960 \\
CenterSpeller & VPibj & 0.814 & 0.837 & 0.837 & 0.837 & 0.791 & 0.837 \\
CenterSpeller & VPica & 0.919 & 0.892 & 0.946 & 0.973 & 0.946 & 1.000 \\
CenterSpeller & VPsaf & 0.949 & 0.949 & 0.949 & 0.949 & 0.949 & 1.000 \\
AMUSE & VPfar & 0.489 & 0.489 & 0.489 & 0.689 & 0.644 & 0.644 \\
AMUSE & VPfau & 0.759 & 0.828 & 0.741 & 0.828 & 0.862 & 0.845 \\
AMUSE & VPfav & 0.717 & 0.830 & 0.811 & 0.830 & 0.830 & 0.811 \\
AMUSE & VPfaw & 0.528 & 0.667 & 0.694 & 0.806 & 0.833 & 0.861 \\
AMUSE & VPfax & 0.646 & 0.500 & 0.634 & 0.732 & 0.683 & 0.683 \\
AMUSE & VPfaz & 0.938 & 0.875 & 0.938 & 0.969 & 0.938 & 0.969 \\
AMUSE & VPfca & 0.737 & 0.763 & 0.658 & 0.974 & 0.974 & 0.974 \\
AMUSE & VPfcb & 0.559 & 0.576 & 0.644 & 0.729 & 0.712 & 0.814 \\
AMUSE & VPfcc & 0.844 & 0.844 & 0.906 & 0.969 & 0.938 & 0.875 \\
AMUSE & VPfcd & 0.714 & 0.714 & 0.657 & 0.829 & 0.857 & 0.829 \\
AMUSE & VPfcg & 0.515 & 0.606 & 0.470 & 0.667 & 0.636 & 0.636 \\
AMUSE & VPfch & 0.293 & 0.224 & 0.293 & 0.362 & 0.362 & 0.414 \\
AMUSE & VPfcj & 0.435 & 0.362 & 0.493 & 0.449 & 0.478 & 0.565 \\
AMUSE & VPfck & 0.547 & 0.528 & 0.528 & 0.642 & 0.679 & 0.679 \\
AMUSE & VPfcm & 0.583 & 0.667 & 0.417 & 0.683 & 0.600 & 0.617 \\
AMUSE & VPkw & 0.879 & 0.879 & 0.909 & 0.939 & 0.879 & 0.939 \\
MVEP & VPfat & 0.846 & 0.974 & 0.769 & 0.923 & 0.897 & 0.897 \\
MVEP & VPgdf & 0.529 & 0.588 & 0.441 & 0.588 & 0.618 & 0.618 \\
MVEP & VPgdg & 0.564 & 0.769 & 0.667 & 0.795 & 0.744 & 0.795 \\
MVEP & VPiac & 0.727 & 0.667 & 0.545 & 0.606 & 0.667 & 0.667 \\
MVEP & VPiba & 0.556 & 0.519 & 0.611 & 0.593 & 0.463 & 0.593 \\
MVEP & VPibe & 0.892 & 0.919 & 0.892 & 1.000 & 1.000 & 1.000 \\
MVEP & VPibs & 0.800 & 0.580 & 0.600 & 0.680 & 0.800 & 0.780 \\
MVEP & VPibt & 0.786 & 0.833 & 0.786 & 0.786 & 0.786 & 0.786 \\
MVEP & VPibu & 0.404 & 0.468 & 0.447 & 0.511 & 0.468 & 0.468 \\
MVEP & VPibv & 0.889 & 0.750 & 0.889 & 0.944 & 0.861 & 0.917 \\
MVEP & VPibw & 0.975 & 0.950 & 0.875 & 1.000 & 1.000 & 0.950 \\
MVEP & VPibx & 0.833 & 0.833 & 0.806 & 0.917 & 0.917 & 0.889 \\
MVEP & VPiby & 0.711 & 0.684 & 0.658 & 0.605 & 0.684 & 0.711 \\
MVEP & VPice & 0.545 & 0.523 & 0.523 & 0.591 & 0.614 & 0.659 \\
MVEP & VPicv & 0.622 & 0.622 & 0.568 & 0.676 & 0.649 & 0.676 \\
Akimpech & ACS & 0.808 & 0.923 & 0.846 & 0.846 & 0.846 & 0.885 \\
Akimpech & APM & 1.000 & 1.000 & 1.000 & 1.000 & 1.000 & 1.000 \\
Akimpech & ASG & 1.000 & 0.962 & 0.962 & 0.923 & 0.962 & 0.962 \\
Akimpech & ASR & 0.811 & 0.946 & 0.649 & 0.946 & 0.892 & 0.838 \\
Akimpech & CLL & 0.846 & 0.872 & 0.923 & 0.872 & 0.923 & 0.974 \\
Akimpech & CLR & 1.000 & 1.000 & 1.000 & 1.000 & 1.000 & 1.000 \\
Akimpech & DCM & 0.939 & 0.898 & 0.918 & 0.959 & 0.959 & 0.939 \\
Akimpech & DLP & 0.826 & 0.870 & 1.000 & 0.957 & 0.957 & 0.957 \\
Akimpech & DMA & 0.900 & 0.800 & 0.800 & 0.900 & 0.867 & 0.867 \\
Akimpech & ELC & 1.000 & 1.000 & 1.000 & 1.000 & 1.000 & 1.000 \\
Akimpech & FSZ & 0.967 & 1.000 & 0.900 & 0.933 & 0.967 & 0.967 \\
Akimpech & GCE & 0.857 & 0.893 & 0.821 & 0.929 & 0.893 & 0.893 \\
Akimpech & ICE & 0.917 & 1.000 & 0.875 & 1.000 & 1.000 & 1.000 \\
Akimpech & IZH & 0.875 & 0.850 & 0.800 & 0.925 & 0.925 & 0.900 \\
Akimpech & JCR & 0.889 & 0.944 & 0.833 & 1.000 & 0.944 & 0.889 \\
Akimpech & JLD & 1.000 & 1.000 & 0.913 & 1.000 & 1.000 & 1.000 \\
Akimpech & JMR & 0.923 & 0.923 & 0.923 & 1.000 & 1.000 & 0.962 \\
Akimpech & JSC & 0.654 & 0.846 & 0.885 & 0.808 & 0.923 & 0.885 \\
Akimpech & JST & 1.000 & 1.000 & 1.000 & 1.000 & 1.000 & 1.000 \\
Akimpech & LAC & 1.000 & 1.000 & 1.000 & 1.000 & 1.000 & 1.000 \\
Akimpech & LAG & 0.930 & 0.907 & 0.930 & 0.907 & 0.953 & 0.907 \\
Akimpech & LGP & 1.000 & 1.000 & 1.000 & 1.000 & 1.000 & 1.000 \\
Akimpech & LPS & 0.800 & 0.800 & 0.800 & 0.800 & 0.800 & 0.800 \\
Akimpech & MoMR & 1.000 & 0.941 & 0.941 & 1.000 & 0.941 & 1.000 \\
Akimpech & PGA & 0.818 & 0.864 & 0.841 & 0.841 & 0.886 & 0.886 \\
Akimpech & WFG & 0.930 & 0.977 & 0.907 & 0.977 & 0.953 & 0.977 \\
Akimpech & XCL & 0.905 & 1.000 & 0.952 & 0.952 & 1.000 & 0.952 \\ \bottomrule
\caption{Detail of the Accuracy results obtains with all early stopping framework mentioned}\label{tab:det_accuearlystopping}
\end{longtable}}

{\setlength\tabcolsep{3.7pt}\small
\begin{longtable}{@{}ll|ccccccccc@{}}
\toprule
Dataset & Subj. & \begin{tabular}[c]{@{}l@{}}SBF\\ L1-SVM\end{tabular} & \begin{tabular}[c]{@{}c@{}}SBF\\ L2-SVM\end{tabular} & \begin{tabular}[c]{@{}c@{}}SBF\\ M-SVM\end{tabular} & \begin{tabular}[c]{@{}l@{}}OSBF\\ L1-SVM\end{tabular} & \begin{tabular}[c]{@{}c@{}}OSBF\\ L2-SVM\end{tabular} & \begin{tabular}[c]{@{}c@{}}OSBF\\ M-SVM\end{tabular} \\ \midrule
ALSP300Speller & 1 & 16.599 & 15.734 & 19.593 & 16.599 & 15.239 & 20.175 \\
ALSP300Speller & 2 & 12.222 & 12.556 & 11.111 & 12.222 & 13.845 & 13.722 \\
ALSP300Speller & 3 & 16.492 & 19.073 & 16.298 & 16.492 & 21.032 & 15.210 \\
ALSP300Speller & 4 & 16.424 & 12.501 & 13.826 & 16.424 & 15.879 & 16.836 \\
ALSP300Speller & 5 & 20.577 & 20.771 & 21.207 & 20.577 & 20.565 & 22.236 \\
ALSP300Speller & 6 & 21.654 & 19.193 & 18.605 & 21.654 & 20.305 & 22.356 \\
ALSP300Speller & 7 & 18.508 & 21.477 & 23.429 & 18.508 & 20.565 & 23.236 \\
ALSP300Speller & 8 & 39.018 & 32.566 & 36.928 & 39.018 & 38.296 & 40.548 \\
P300Speller & 1 & 25.132 & 23.148 & 25.132 & 31.020 & 30.263 & 31.815 \\
P300Speller & 2 & 38.774 & 40.025 & 40.025 & 41.359 & 40.025 & 41.359 \\
P300Speller & 3 & 32.652 & 22.977 & 30.263 & 31.815 & 19.547 & 29.542 \\
P300Speller & 4 & 33.831 & 36.650 & 36.650 & 24.433 & 25.132 & 24.433 \\
P300Speller & 5 & 41.359 & 44.314 & 41.359 & 37.599 & 37.599 & 37.599 \\
P300Speller & 6 & 25.132 & 31.020 & 26.655 & 27.573 & 29.542 & 29.542 \\
P300Speller & 7 & 27.573 & 21.454 & 27.573 & 28.855 & 26.974 & 27.573 \\
P300Speller & 8 & 30.263 & 28.855 & 31.020 & 21.990 & 19.991 & 20.456 \\
P300Speller & 9 & 41.359 & 42.786 & 41.359 & 34.466 & 34.466 & 34.466 \\
P300Speller & 10 & 47.722 & 49.631 & 47.722 & 44.314 & 40.025 & 44.314 \\
CenterSpeller & VPiac & 20.249 & 26.589 & 23.718 & 23.546 & 24.404 & 21.732 \\
CenterSpeller & VPiba & 26.307 & 26.794 & 26.794 & 27.868 & 26.982 & 30.014 \\
CenterSpeller & VPibb & 31.684 & 31.049 & 31.816 & 33.617 & 30.699 & 32.973 \\
CenterSpeller & VPibc & 29.426 & 27.713 & 28.289 & 30.473 & 31.336 & 29.083 \\
CenterSpeller & VPibd & 23.146 & 27.908 & 24.662 & 23.647 & 26.704 & 23.576 \\
CenterSpeller & VPibe & 36.558 & 33.161 & 37.370 & 32.653 & 33.707 & 32.973 \\
CenterSpeller & VPibf & 32.471 & 26.430 & 31.821 & 32.140 & 30.004 & 31.824 \\
CenterSpeller & VPibg & 43.509 & 43.085 & 44.608 & 39.607 & 43.725 & 38.569 \\
CenterSpeller & VPibh & 18.407 & 11.128 & 16.535 & 16.797 & 15.199 & 19.050 \\
CenterSpeller & VPibi & 28.770 & 31.514 & 31.356 & 27.147 & 27.185 & 26.935 \\
CenterSpeller & VPibj & 22.652 & 24.898 & 24.808 & 18.391 & 16.365 & 18.844 \\
CenterSpeller & VPica & 25.320 & 24.957 & 27.354 & 25.990 & 24.959 & 25.704 \\
CenterSpeller & VPsaf & 30.342 & 29.324 & 31.062 & 28.999 & 28.167 & 31.544 \\
AMUSE & VPfar & 5.549 & 5.370 & 6.552 & 8.610 & 9.986 & 8.314 \\
AMUSE & VPfau & 14.682 & 16.384 & 17.185 & 14.834 & 16.699 & 16.139 \\
AMUSE & VPfav & 15.535 & 14.670 & 15.456 & 19.101 & 21.385 & 19.395 \\
AMUSE & VPfaw & 8.541 & 12.963 & 12.811 & 14.112 & 16.804 & 16.256 \\
AMUSE & VPfax & 10.425 & 8.435 & 10.154 & 11.186 & 10.287 & 9.579 \\
AMUSE & VPfaz & 29.548 & 31.219 & 31.078 & 33.581 & 30.235 & 37.179 \\
AMUSE & VPfca & 19.655 & 20.983 & 17.787 & 27.462 & 27.767 & 27.462 \\
AMUSE & VPfcb & 7.303 & 8.242 & 10.630 & 8.356 & 10.413 & 11.195 \\
AMUSE & VPfcc & 20.129 & 22.135 & 24.316 & 22.509 & 25.576 & 22.971 \\
AMUSE & VPfcd & 15.004 & 14.126 & 13.374 & 13.763 & 15.183 & 14.970 \\
AMUSE & VPfcg & 8.621 & 10.283 & 7.204 & 9.691 & 9.228 & 9.166 \\
AMUSE & VPfch & 1.838 & 1.146 & 1.469 & 2.819 & 2.862 & 3.136 \\
AMUSE & VPfcj & 3.197 & 2.200 & 3.654 & 4.565 & 5.378 & 7.027 \\
AMUSE & VPfck & 9.682 & 9.056 & 9.368 & 9.594 & 10.557 & 10.408 \\
AMUSE & VPfcm & 5.763 & 7.058 & 4.565 & 9.227 & 7.871 & 8.002 \\
AMUSE & VPkw & 27.244 & 22.588 & 28.013 & 22.433 & 23.480 & 24.398 \\
MVEP & VPfat & 14.958 & 14.291 & 13.818 & 17.155 & 15.372 & 14.972 \\
MVEP & VPgdf & 5.504 & 5.419 & 3.910 & 5.314 & 5.845 & 6.016 \\
MVEP & VPgdg & 6.586 & 11.509 & 9.982 & 8.776 & 8.822 & 9.048 \\
MVEP & VPiac & 11.131 & 8.422 & 7.097 & 6.886 & 8.350 & 8.398 \\
MVEP & VPiba & 4.186 & 5.435 & 7.175 & 5.528 & 4.035 & 6.015 \\
MVEP & VPibe & 16.537 & 17.687 & 16.239 & 18.598 & 21.599 & 18.751 \\
MVEP & VPibs & 8.185 & 6.961 & 6.306 & 6.972 & 9.618 & 9.155 \\
MVEP & VPibt & 9.232 & 10.368 & 9.426 & 12.149 & 11.743 & 12.149 \\
MVEP & VPibu & 2.253 & 3.136 & 2.854 & 4.108 & 3.430 & 3.672 \\
MVEP & VPibv & 10.438 & 7.861 & 10.650 & 14.113 & 12.383 & 13.729 \\
MVEP & VPibw & 13.403 & 13.192 & 17.706 & 17.485 & 19.418 & 15.786 \\
MVEP & VPibx & 15.977 & 14.671 & 15.071 & 16.467 & 16.972 & 15.894 \\
MVEP & VPiby & 10.564 & 7.123 & 9.098 & 6.992 & 8.197 & 8.767 \\
MVEP & VPice & 5.280 & 4.897 & 5.019 & 5.710 & 6.264 & 6.892 \\
MVEP & VPicv & 7.271 & 7.813 & 6.522 & 7.422 & 7.343 & 7.215 \\
Akimpech & ACS & 23.354 & 29.594 & 27.111 & 22.840 & 23.990 & 24.666 \\
Akimpech & APM & 51.115 & 53.921 & 54.563 & 50.551 & 51.691 & 48.586 \\
Akimpech & ASG & 40.801 & 43.749 & 42.566 & 36.551 & 40.254 & 41.175 \\
Akimpech & ASR & 22.004 & 20.972 & 13.311 & 22.561 & 20.642 & 16.259 \\
Akimpech & CLL & 24.106 & 29.092 & 30.928 & 23.146 & 29.745 & 32.356 \\
Akimpech & CLR & 52.884 & 47.413 & 45.833 & 45.833 & 41.666 & 39.285 \\
Akimpech & DCM & 45.059 & 42.503 & 42.313 & 42.218 & 43.689 & 40.103 \\
Akimpech & DLP & 21.249 & 31.112 & 28.765 & 28.278 & 28.668 & 25.819 \\
Akimpech & DMA & 18.975 & 25.073 & 24.142 & 25.724 & 28.174 & 26.192 \\
Akimpech & ELC & 53.088 & 50.737 & 53.710 & 50.182 & 52.083 & 48.076 \\
Akimpech & FSZ & 34.140 & 45.379 & 29.082 & 32.874 & 35.871 & 34.417 \\
Akimpech & GCE & 25.441 & 28.303 & 26.500 & 25.405 & 25.961 & 24.513 \\
Akimpech & ICE & 35.864 & 38.841 & 28.723 & 35.256 & 34.203 & 33.536 \\
Akimpech & IZH & 31.358 & 32.417 & 28.353 & 31.212 & 32.067 & 29.545 \\
Akimpech & JCR & 17.051 & 21.163 & 22.774 & 23.031 & 23.047 & 19.216 \\
Akimpech & JLD & 49.818 & 45.833 & 39.010 & 44.788 & 47.577 & 42.177 \\
Akimpech & JMR & 29.005 & 41.494 & 39.660 & 37.878 & 41.044 & 39.746 \\
Akimpech & JSC & 17.758 & 27.944 & 25.540 & 19.469 & 28.301 & 25.784 \\
Akimpech & JST & 53.710 & 57.053 & 55.220 & 51.691 & 54.347 & 51.115 \\
Akimpech & LAC & 58.760 & 57.772 & 60.043 & 55.443 & 50.182 & 52.083 \\
Akimpech & LAG & 39.163 & 38.809 & 41.354 & 36.548 & 41.729 & 36.667 \\
Akimpech & LGP & 41.540 & 50.737 & 51.497 & 46.928 & 51.497 & 50.551 \\
Akimpech & LPS & 21.166 & 23.951 & 22.754 & 22.754 & 22.754 & 24.598 \\
Akimpech & MoMR & 36.569 & 37.083 & 33.674 & 35.714 & 32.850 & 35.166 \\
Akimpech & PGA & 17.631 & 24.322 & 21.851 & 22.807 & 24.922 & 22.353 \\
Akimpech & WFG & 31.372 & 41.167 & 37.770 & 34.324 & 33.587 & 35.640 \\
Akimpech & XCL & 41.036 & 52.083 & 46.840 & 42.640 & 45.833 & 40.279 \\ \bottomrule
\caption{Detail of the ITR ($bit/min$) results obtains with all early stopping framework mentioned}\label{tab:det_ITRearlystopping}
\end{longtable}}

\end{document}